\newtheorem{theorem}{Theorem}
\newtheorem{definition}{Definition}
\newtheorem{conjecture}{Conjecture}
\newtheorem{lemma}{Lemma}
\newcommand{\ket}[1]{\left| #1 \right\rangle}
\newcommand{\f}[1]{\textsc{#1}}
\newcommand{\sr}[1]{\texttt{#1}}
\newcommand{\comment}[1]{}
\newcommand{\dnote}[1]{}
\newcommand{\jnote}[1]{}
\newcommand{\email}[1]{\href{mailto:#1}{\texttt{#1}}}
\def\H{{\cal H}}
\def\L{{\cal L}}
\def\R{{\cal R}}
\def\S{{\cal S}}
\title{Exact quantum query complexity of $\f{EXACT}_{k,l}^n$}
\author[1]{Andris Ambainis}
\author[1]{J\=anis Iraids}
\author[2]{Daniel Nagaj}
\affil[1]{Faculty of Computing, University of Latvia, Rai\c{n}a bulv\=aris 19, Riga, LV-1586, Latvia, \email{ambainis@lu.lv}, \email{janis.iraids@gmail.com}}
\affil[2]{Institute of Physics, Slovak Academy of Sciences, D\'{u}bravsk\'{a} cesta 9, 845 11 Bratislava, Slovakia,\email{dnagaj@gmail.com}}
\begin{document}
\maketitle

%%%%%%%%%%%%%%%%%%%%%%%%%%%%%%%%%%%%%%%%%%%%%%%%
\begin{abstract}
\dnote{A variant of an abstract that is just a short text: 

In the exact quantum query model a successful algorithm must always output the correct function value. We investigate the function that is true if exactly $k$ or $l$ of the $n$ input bits given by an oracle are 1. We find an optimal algorithm (for some cases), and a nontrivial general lower and upper bound on the minimum number of queries to the black box. 
}

In the decision tree model, one's task is to compute a boolean function $f:\{0,1\}^n\to\{0,1\}$ on an input $x\in\{0,1\}^n$ that is accessible via queries to a black box (the black box hides the bits $x_i$). \dnote{I added the parentheses} 
In the quantum case, classical queries and computation are replaced by unitary transformations. 
A quantum algorithm is exact if it always outputs the correct value of $f$ (in contrast to the standard model of 
quantum algorithms where the algorithm is allowed to be incorrect with a small probability). The minimum number of queries for an exact quantum algorithm computing the function $f$ is denoted by $Q_E(f)$.

We consider the following $n$ bit function with $0\leq k \leq l \leq n$:
\[\f{EXACT}_{k,l}^n(x)=\begin{cases}1\text{, if }x_1+\ldots+x_n\in\{k,l\},\\
    0\text{, otherwise},\end{cases}\]
i.e. we want to give the answer 1 only when exactly $k$ or $l$ of the bits $x_i$ are 1.
We construct a quantum query algorithm for this function and give lower bounds for it, with lower bounds matching the complexity of the algorithm in some cases (and almost matching it in other cases):
\dnote{Please, reread, I rewrote this a little.}
\begin{itemize}
   \item If $l-k=1$ and $k=n-l$, then $Q_E(\f{EXACT}_{k,k+1}^{2k+1})=k+1$.
   \item If $l-k\in\{2,3\}$, then $Q_E(\f{EXACT}_{k,l}^n)=\max\{n-k,l\}-1$.
   \item For all $k,l$: $\max\{n-k, l\}-1 \leq Q_E(\f{EXACT}_{k,l}^n) \leq \max\{n-k, l\}+1$.
\end{itemize}
\end{abstract}

%%%%%%%%%%%%%%%%%%%%%%%%%%%%%%%%%%%%%%%%%%%%%%%%

\section{Introduction}
In this paper we study the computational complexity of boolean functions in the quantum black box model. It is a generalization of the decision tree model, where we are computing an $n$-bit function $f:\{0,1\}^n\to \{0,1\}$ on an input $x\in\{0,1\}^n$ that can only be accessed through a black box by querying some bit $x_i$ of the input. 
%\dnote{See a comment just below. Why do we need the $x$ label here It is confusing with the function values later, where we call them $x_i$. Maybe we should make clear that we are using $x_i = (-1)^{f(i)}$ and $x_j = (-1)^{f(j)}$ in the later computations or call them $y_i, y_j$ instead, to clear the confusion. Or is it only me that uses $x_i$? I understand that $x$'s are input bits in Section 2, but from Section 3 on they are the $\pm 1$'s and should be called $y_i$.} 
%\jnote{There are two schools of thought, one says that you query $x\in\{0,1\}^n$ by specifying $i\in[n]$ to the black box, it returns $i$'th bit and your task is to compute some function $f$ on $x$. The other school of thought says you are querying $F:[N]\to\{0,1\}$ (where $N$ is usually $2^k$) by specifying $i\in[N]$ to the black box, it returns $F(i)$ and you have to test some property of $F$. They are essentially the same and I think we are consistently using the terminology of the former. In particular $EXACT_k^n=f$ and not $F$!} 
In the quantum black box model the state of the computation is described by a quantum state from the Hilbert space $\H_Q\otimes \H_W \otimes \H_O$ where $\H_Q=\{\ket{0},\ket{1}, \ldots, \ket{n}\}$ is the query subspace, $\H_W$ is the working memory and $\H_O=\{\ket{0},\ket{1}\}$ is the output subspace. A computation using $t$ queries consists of a sequence of unitary transformations $U_t\cdot O_x \cdot U_{t-1} \cdot O_x \cdot \ldots \cdot O_x \cdot U_0$ followed by a measurement, where the $U_i$'s are independent of the input and $O_x=O_{Q,x}\otimes I\otimes I$ with
\[O_{Q,x}\ket{i}=\begin{cases}(-1)^{x_i}\ket{i}=\hat{x}_i \ket{i}\text{, if }i\in[n],\\
\ket{0}\text{, if }i=0,
\end{cases}
\]
is the query transformation, where $x_i \in \{0,1\}$ or equivalently, $\hat{x}_i \in \{-1,1\}$.
The final measurement is a complete projective measurement in the computational basis and the output of the algorithm is the result of the last register, $\H_O$. We say that a quantum algorithm computes $f$ exactly if for all inputs $x$ the output of the algorithm always equals $f(x)$. Let us denote by $Q_E(f)$ the minimum number of queries over all quantum algorithms that compute $f$ exactly.

For quite a long time the largest known separation between the classical decision tree complexity $D(f)$ and $Q_E(f)$ was only by a factor of two --- the $\f{XOR}$ of two bits can be computed exactly using only 1 quantum query\cite{CEMM98,DJ92,FGGS98}. However, in 2012 Ambainis gave the first asymptotic separation that achieved $Q_E(f)=O(D(f)^{0.8675})$ for a class of functions $f$ \cite{Amb13}. Next, in 2015 Ambainis et al. used pointer functions
to show a near quadratic separation between these two measures: $Q_E(f)=\tilde{O}(\sqrt{D(f)})$ \cite{ABB+15}. On the other hand Midrij\={a}nis has proved that the maximum possible separation between $Q_E(f)$ and $D(f)$ is at most cubic \cite{Mid04}.

However, 
the techniques for designing exact quantum algorithms are rudimentary compared to the bounded error setting. Other than the well known {\em $\f{XOR}$ trick} --- constructing a quantum algorithm from a classical decision tree that is allowed to ``query'' the $\f{XOR}$ of any two bits --- there are few alternate approaches. In addition to the asymptotic separations of \cite{ABB+15,Amb13}, Montanaro et al. \cite{MJM15} gave a 2 query quantum algorithm for the symmetric 4 bit function
\[
    \f{EXACT}_2^4(x)
    =\begin{cases}
        1\text{, if }x_1+x_2+x_3+x_4=2,\\
        0\text{, otherwise,}\end{cases}
\] 
and showed that it could not be computed optimally using the XOR trick. Afterwards Ambainis et al. gave an algorithm \cite{AIS13} for two classes of symmetric functions:
\[
    \f{EXACT}_k^n(x)
    =\begin{cases}
        1\text{, if }x_1+x_2+\ldots+x_n=k,\\
        0\text{, otherwise}
    \end{cases};
    \quad Q_E(\f{EXACT}_k^n)\leq \max\{k,n-k\},
\]
and the threshold function
\[
    \f{TH}_k^n(x)
    =\begin{cases}
        1\text{, if }x_1+x_2+\ldots+x_n\geq k,\\
        0\text{, otherwise}
    \end{cases};
    \quad Q_E(\f{TH}_k^n)\leq \max\{k,n-k+1\}.
\]
For partial functions quantum algorithms with superpolynomial speedup are known \cite{DJ92,BH97}. It seems that our work relates well to the results of Qiu and Zheng on partial functions based on the Deutsch-Jozsa problem \cite{QZ16}.

%%%%%%%%%%%%%%%%%%%%%%%%%%%%%%%%%%%%%%%%%%%%%%%%
%%%%%%%%%%%%%%%%%%%%%%%%%%%%%%%%%%%%%%%%%%%%%%%%

\subsection{Our results}
\label{sec:ourresults}

We consider exact quantum algorithms for symmetric total boolean functions, i.e., functions for which permuting the input bits does not change its value. For symmetric functions, the largest known separation remains a factor of 2. We know from von zur Gathen's and Roche's work on polynomials \cite{vZGR97} and quantum lower bounds using polynomials \cite{BBC+98} that for symmetric $f: Q_E(f)\geq \frac{n}{2}-O(n^{0.548})$, thus the largest known separation is either optimal or close to being optimal.

However, many of the known exact algorithms are for symmetric functions (for example, XOR, EXACT and TH functions mentioned in the previous section). Because of that, we think that symmetric functions may be an interesting ground to explore new methods for developing more exact quantum algorithms.

In Section~\ref{sec:unb3} we present an algorithm achieving up to $D(f)=2Q_E(f)$ for a certain class of symmetric functions 
%\dnote{Consider adding: and upper and bounds with similar scaling for the quantum query complexity of another class of symmetric functions}. 
\begin{definition}
  Let $\f{EXACT}_{k,l}^n$ for $0 \leq k \leq l \leq n$, be an $n$-argument symmetric boolean function that returns 1 if and only if the input contains exactly $k$ ones or exactly $l$ ones.
  \[\f{EXACT}_{k,l}^n(x)=\begin{cases}1,&\text{ if }|x|\in\{k,l\};\\
  0,&\text{ otherwise.}
  \end{cases}\]
\end{definition}
%\dnote{Put the $k$ or $l$ on one line?}
Let us denote by $d$ the separation between $l$ and $k$: $d=l-k$. In general a symmetric boolean function $\f{SYM}_a$ on $n$ input bits can be defined by a list $a=(a_0,\ldots, a_n)\in\{0,1\}^{n+1}$ such that $\f{SYM}_a(x)=a_{|x|}$. When $d>0$ it may be convenient to think of $\f{EXACT}_{k,l}^n$ in this way. In this representation $\f{EXACT}_{k,l}^n$ corresponds to lists $a$ of length $n+1$ with two 1s and the number of zeroes before the first, after the last 1, and distance between 1s correspond to parameters $k$, $n-l$, and $d$ respectively.

The boundary cases, $d=0$ and $d=n$, have been solved previously. When $d=n$, the function is usually referred to as $\f{EQUALITY}_n$. It can be solved with $n-1$ quantum queries using the well-known XOR trick. The case $d=0$ is also known as the $\f{EXACT}_k^n$ function which has been analyzed in \cite{AIS13} where it was shown that $Q_E(\f{EXACT}_k^n)=\max{\{k,n-k\}}$. In this paper, we completely solve the $d\in\{2,3\}$ cases and partially solve the $d=1$ case and $d\geq 4$ case.

The first of our results is
\begin{theorem}
  \label{thm:d1}
  If $d=1$, $l=n-k$ and $k>0$, then for $\f{EXACT}_{k,l}^n=\f{EXACT}_{k,k+1}^{2k+1}$
  \[Q_E(\f{EXACT}_{k,k+1}^{2k+1})=k+1.
\]
\end{theorem}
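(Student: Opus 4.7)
I will prove the two directions of Theorem~\ref{thm:d1} separately.

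For the upper bound $Q_E(\f{EXACT}_{k,k+1}^{2k+1}) \le k+1$, the natural approach is induction on $k$. The base case $k=1$ solves $\f{EXACT}_{1,2}^3$ in two queries as follows. First query $x_1\oplus x_2$; if the answer is $1$, then $\{x_1,x_2\}$ contributes exactly one to $|x|$ so $|x|=1+x_3\in\{1,2\}$ and we output $1$; if the answer is $0$, then $x_1=x_2$ and a second query of $x_1\oplus x_3$ decides whether $|x|\in\{1,2\}$. The inductive step again queries $x_1\oplus x_2$: if it returns $1$, the remaining problem is $\f{EXACT}_{k-1,k}^{2k-1}$, which has the same form with parameter $k-1$ and by induction costs $k$ further queries. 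The subtle branch is ``$x_1\oplus x_2=0$'', where one must distinguish $\f{EXACT}_{k,k+1}^{2k-1}$ (if $x_1=0$) from $\f{EXACT}_{k-2,k-1}^{2k-1}$ (if $x_1=1$) in $k$ remaining queries; I expect this to be handled by the global quantum scheme of Section~\ref{sec:unb3} rather than by naive branching.

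For the lower bound $Q_E(\f{EXACT}_{k,k+1}^{2k+1})\ge k+1$, the ordinary polynomial method falls one short. The univariate polynomial $p(y)$ with $p(i)=[\,i\in\{k,k+1\}\,]$ for $i=0,1,\ldots,2k+1$ has degree exactly $2k$: one checks that $\Delta^{2k+1}p(0)=(-1)^k\bigl[\binom{2k+1}{k+1}-\binom{2k+1}{k}\bigr]=0$ (since the two binomials are equal) while $\Delta^{2k}p(0)=(-1)^k\bigl[\binom{2k}{k}-\binom{2k}{k+1}\bigr]\ne 0$, so the polynomial bound only yields $Q_E\ge k$. To close the gap I will use the stronger constraint that an exact $T$-query algorithm requires both the acceptance polynomial $P(x)$ and the rejection polynomial $1-P(x)$ to be sums of squared moduli $\sum_j|L_j(\hat x)|^2$ of polynomials $L_j$ of degree $\le T$ in $\hat x_i=1-2x_i$. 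At $k=1,\ T=1$ this already fails: writing the Gram matrix $G_{ii'}=\sum_j c_{i,j}\overline{c_{i',j}}$ of the linear coefficients of any putative decomposition of $1-P=(1+\hat x_1\hat x_2+\hat x_1\hat x_3+\hat x_2\hat x_3)/4$, the constraints $\mathrm{Re}\,G_{ii'}=1/8$ for distinct $i,i'\in\{1,2,3\}$ together with the PSD inequalities $|G_{ii'}|^2\le G_{ii}G_{i'i'}$ force $G_{11}+G_{22}+G_{33}\ge 3/8$, contradicting the constant-term budget $\le 1/4$. I plan to lift this PSD infeasibility uniformly in $k$, ideally by exhibiting a dual certificate --- a signed measure on $\{0,1,\ldots,2k+1\}$ whose moments witness the obstruction.

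The main difficulty is the uniform-in-$k$ version of the PSD argument, since the coefficient structure of $1-p(|x|)$ in the $\hat x_i$ variables grows more involved as $k$ increases. An alternative route I would also pursue is a direct reduction from a parity-type problem on $2k+1$ bits, whose exact quantum query complexity is $\lceil(2k+1)/2\rceil=k+1$, into $\f{EXACT}_{k,k+1}^{2k+1}$; such a reduction would bypass the SOS machinery altogether, but finding the right input-embedding appears nontrivial.
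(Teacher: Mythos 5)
Your proposal has genuine gaps in both directions, and in each case the missing piece is precisely the hard part of the paper's proof.

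For the upper bound, your inductive XOR-branching collapses exactly where you suspect it does: after learning $x_1\oplus x_2=0$ the residual problem is not an instance of $\f{EXACT}_{k',k'+1}^{2k-1}$ on the remaining bits --- it is ``accept iff $|x_{3..2k+1}|\in\{k,k+1\}$ when $x_1=0$ and iff $|x_{3..2k+1}|\in\{k-2,k-1\}$ when $x_1=1$,'' a function that still depends on the unqueried value of $x_1$, so the induction hypothesis does not apply. Deferring to ``the global quantum scheme of Section~\ref{sec:unb3}'' is not a proof; that scheme \emph{is} the upper bound. The paper's algorithm is not a measure-after-each-query branching at all: the recursive subroutine $\sr{UNB-R}_1^n$ carries an unnormalized coherent state $\sum_i \hat x_i\ket{\S}+\sqrt{\gamma}\sum_{i<j}(\hat x_i-\hat x_j)\ket{i,j}$ across queries, and the whole difficulty is the recursion (\ref{eq:gamma}) for $\gamma$ and the verification that $\gamma<1$ along the chain. (The paper even remarks that numerics suggest no optimal algorithm for these functions measures completely after each query, which is strong evidence that no refinement of your branching strategy can reach $k+1$.)

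For the lower bound, your diagnosis is right --- the plain polynomial method gives only degree $2k$, hence $Q_E\geq k$, and one must exploit the sum-of-squares structure of the acceptance/rejection probabilities --- and your $k=1$ Gram-matrix infeasibility argument is correct. But ``lift this PSD infeasibility uniformly in $k$, ideally by exhibiting a dual certificate'' is exactly the step you have not done, and it is where all the work lies. The paper closes this gap with Blekherman's theorem (Theorem~\ref{thm:blek}): the symmetrization of any $\sum_i r_i^2$ with $\deg r_i\leq k$ has the structured univariate form (\ref{eq:symsos}), with sum-of-squares coefficient polynomials $p_{k-j}$. This structure forces $q(|x|)$ to be nonnegative on the real interval $[k-1,k+2]$, not merely at lattice points; since $q$ vanishes at $|x|=k$ and $|x|=k+1$ and equals $1$ at the other $2k$ integers, an extremum count gives $\deg q\geq 2k+2$, contradicting $\deg q\leq 2k$. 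Without Blekherman's theorem (or an equivalent representation-theoretic fact about symmetrized SOS polynomials on the hypercube --- the paper proves it via the decomposition of $H_\wp$ into irreducibles), your general-$k$ argument does not exist yet.
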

The algorithm we provide in the proof works also when $l \neq n-k$ by padding the function. However, the algorithm is then only an upper bound on $Q_E(\f{EXACT}_{k,k+1}^{n})$. For example,  $Q_E(\f{EXACT}_{2,3}^3)=2$ but our algorithm uses 3 queries for the padded version of the function (if we pad the input with two zeroes, we end up computing $\f{EXACT}_{2,3}^5$). Furthermore, the computations by Montanaro et al. \cite{MJM15} suggest that $Q_E(\f{EXACT}_{3,4}^5)=3$ and $Q_E(\f{EXACT}_{4,5}^6)=4$. There, unlike the $\f{EXACT}_{2,3}^3$ case, we don't know what the optimal algorithm looks like.

Next, we have a complete understanding of the $d\in\{2,3\}$ case,
%\dnote{where we do not need the condition $k+l=n$ anymore.}
\begin{theorem}
\label{thm:d23}
  If $d\in\{2,3\}$, then
  \[Q_E(\f{EXACT}_{k,l}^n)=\max\{n-k, l\}-1.\]
\end{theorem}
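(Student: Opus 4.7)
The plan is to prove both $Q_E(\f{EXACT}_{k,l}^n) \geq \max\{n-k,l\}-1$ and $Q_E(\f{EXACT}_{k,l}^n) \leq \max\{n-k,l\}-1$ when $d=l-k\in\{2,3\}$. The bit-flipping symmetry $\f{EXACT}_{k,l}^n(x)=\f{EXACT}_{n-l,n-k}^n(\bar x)$ preserves $d$ and leaves $\max\{n-k,l\}$ invariant, so I may assume without loss of generality that $n-k\geq l$; under this convention the target complexity is $n-k-1$.

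The lower bound $Q_E\geq\max\{n-k,l\}-1$ is the universal inequality announced in the abstract's third bullet, so I would prove it once as a separate lemma and simply invoke it here. The standard polynomial method is too weak for this bound: the univariate polynomial interpolating $\f{EXACT}_{k,l}^n$ has degree $n-1$ or $n$, which only forces $Q_E\geq\lceil n/2\rceil$. Obtaining the sharper $\max\{n-k,l\}-1$ bound calls either for an adversary argument or for a reduction to a restricted subfunction of $\f{EQUALITY}_m$ type; concretely, I would try fixing suitable bits so that $\f{EXACT}_{k,l}^n$ restricts to $\f{EXACT}_{0,d}^{n-k}$ or a similar partial equality, for which a direct hybrid/adversary argument delivers the desired bound.

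For the upper bound I construct $(n-k-1)$-query algorithms for $d=2$ and $d=3$ separately, because the parity structure of the two target weights differs. When $d=3$, $k$ and $l=k+3$ have opposite parity, so the single bit $|x|\bmod 2$ selects which of the two accepting weights is active; I would combine a parity computation (in the spirit of the $\f{XOR}$ trick that solves $\f{EQUALITY}_n=\f{EXACT}_{0,n}^n$ in $n-1$ queries) with a verification subroutine based on the $\f{EXACT}_k^n$ algorithm of \cite{AIS13}. When $d=2$, parity offers no shortcut, so I would instead modify the $\f{EXACT}_k^n$ algorithm directly: the goal is a state-preparation and query schedule of length $n-k-1$ whose final acceptance amplitude symmetrises to a polynomial in $|x|$ of degree at most $2(n-k-1)$ that vanishes at every $j\in\{0,\ldots,n\}\setminus\{k,k+2\}$ and equals $1$ at both $k$ and $k+2$. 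The feasibility condition $2(n-k-1)\geq n-1$ is equivalent to $n\geq 2k+1$, which is automatic under our assumption $n-k\geq l=k+2$.

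The main obstacle is the $d=2$ construction: without a parity shortcut, one has to genuinely save a query relative to $\f{EXACT}_k^n$ while simultaneously gaining the extra accepting weight $k+2$. Verifying that the polynomial described above is realisable as the output amplitude of a legitimate exact quantum algorithm (rather than merely a polynomial of the right degree) requires a careful amplitude computation, likely along the matrix-based lines of \cite{AIS13,MJM15}. The plan is to first work out small cases such as $\f{EXACT}_{0,2}^n$ and $\f{EXACT}_{1,3}^n$ explicitly, uncover the pattern, and then verify the general scheme by induction on $n-k$; corner cases ($k=0$, $l=n$, or small $n$) for the lower-bound lemma should be routine but need to be checked separately.
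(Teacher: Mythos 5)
Your lower-bound plan is the part that genuinely fails. You are right that the acceptance-probability polynomial has degree only about $n$, forcing merely $Q_E\geq (n-1)/2$, and you correctly identify the useful restriction (fix $k$ bits to one, reducing to $\f{EXACT}_{0,d}^{n-k}$). But the tools you then propose --- a hybrid or adversary argument on that restricted function --- cannot deliver $n-k-1$: both methods lower-bound the \emph{bounded-error} query complexity, and the bounded-error complexity of $\f{EXACT}_{0,d}^{m}$ is $O(\sqrt{md})$ (all of its value changes occur at weights within $d$ of $0$, so approximate counting decides it), i.e.\ $O(\sqrt{n})$ for $d\in\{2,3\}$, polynomially short of the target. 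The argument that actually works --- and is what the paper uses --- is the exact-computation variant of the polynomial method from the $\f{OR}_n$ lower bound of \cite{BBC+98}: pick a final basis state in the output-$1$ subspace whose amplitude is non-zero on the all-zeros input; by exactness that amplitude is a degree-$\leq t$ polynomial in $\hat{x}$ vanishing identically on every input of weight in $\{1,\ldots,n-k\}\setminus\{d\}$; symmetrizing and passing to a univariate polynomial yields a non-zero polynomial with $n-k-1$ roots, hence $t\geq n-k-1$. This is still a polynomial argument, just applied to a single amplitude rather than to the acceptance probability; your proposal discards it prematurely and replaces it with techniques that provably cannot reach the bound.

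The upper bound is where essentially all of the paper's work lives, and your proposal does not yet contain a construction. For $d=3$ the parity shortcut does not obviously help: computing $|x|\bmod 2$ alone costs $\lceil n/2\rceil$ queries and the subsequent $\f{EXACT}$ verification up to $n-k$ more, so the two phases would have to share queries in a nontrivial way --- which is exactly the hard part. For $d=2$ you reduce the task to exhibiting an amplitude polynomial of degree $2(n-k-1)$ with the right zero set, but degree feasibility is far from sufficient for realizability by an exact algorithm, as you yourself note. The paper instead builds a recursive subroutine $\sr{UNB-R}_d^n$ that starts from a precomputed state $\sum_i\hat{x}_i\ket{\S}+\sqrt{\gamma}\sum_{i<j}(\hat{x}_i-\hat{x}_j)\ket{i,j}$, spends one query, and either certifies the answer or identifies a balanced pair (shrinking $n$ by $2$); the entire analysis rests on a recurrence for the parameter $\gamma$ and on proving $\gamma<1$ along the whole chain, with explicit base cases $\sr{UNB-R}_2^2$ (zero queries) and $\sr{UNB-R}_3^5$ (two queries, $\gamma=1/112$). ``Work out small cases and induct'' is a reasonable research plan, but the induction one actually needs is this specific $\gamma$-recurrence together with the inequality $\gamma(\sr{UNB-R}_d^n)\leq 1/n$ past an explicit threshold, none of which is anticipated in your sketch.
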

In particular, when $d=2$ and $l=n-k$, we have $l=k+2$ and $n = 2k+2$, meaning $l = \frac{n}{2}+1$, giving us $Q_E(\f{EXACT}_{k,l}^n)=\frac{n}{2}$ whereas the deterministic query complexity is $D(\f{EXACT}_{k,l}^n)=n$, hence we exhibit a factor of 2 gap between $Q_E(f)$ and $D(f)$ which is the largest known gap for a symmetric boolean function.

For larger values of $d$, we provide an exact quantum algorithm and a lower bound that is 2 queries less than the complexity of the algorithm:
\begin{theorem}
\label{thm:lowd4}
  If $d \geq 4$, then
  \[\max\{n-k, l\}-1 \leq Q_E(\f{EXACT}_{k,l}^n) \leq \max\{n-k, l\}+1,\]
\end{theorem}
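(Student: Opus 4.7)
The theorem provides matching upper and lower bounds up to an additive $2$; I address each part.

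For the upper bound $Q_E(\f{EXACT}_{k,l}^n) \le \max\{n-k,l\}+1$, the plan is to exhibit an explicit exact quantum algorithm, whose structure is developed in Section~\ref{sec:unb3}. By the obvious duality $\f{EXACT}_{k,l}^n(x)=\f{EXACT}_{n-l,n-k}^n(\bar{x})$, we may assume $k+l \ge n$, so that $\max\{n-k,l\}=l$ and the goal becomes an $(l+1)$-query algorithm. The idea is to start from the $\f{EXACT}_l^n$ algorithm of \cite{AIS13}, which already uses $\max\{l,n-l\}=l$ queries, and to modify it so that it additionally accepts at Hamming weight $k$. The separation $d \ge 4$ gives enough ``room'' in the algorithm's amplitude structure to incorporate the second accepting value with just one extra query, rather than the naive $l+\max\{k,n-k\}$ queries that a disjunctive combination would cost.

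For the lower bound $Q_E(\f{EXACT}_{k,l}^n) \ge \max\{n-k,l\}-1$, the plan is to combine the polynomial method with a reduction to the single-weight case. The polynomial method alone yields $Q_E \ge \lceil n/2 \rceil$: any $T$-query exact algorithm symmetrizes to a univariate polynomial of degree at most $2T$ that interpolates $f$ on $\{0,1,\ldots,n\}$, and the interpolating polynomial for $\f{EXACT}_{k,l}^n$ has degree $n$ (or $n-1$ in the degenerate case $d$ odd and $k+l=n$). When $\max\{n-k,l\}-1>n/2$, this is insufficient, and one reduces to simpler instances whose complexities are known from \cite{AIS13}. Setting $l$ input bits to $1$ restricts $f$ to $\f{EXACT}_0^{n-l}$ with $Q_E=n-l$; setting $n-k$ input bits to $0$ restricts $f$ to $\f{EXACT}_k^k$ with $Q_E=k$; and more intricate restrictions (or reductions preserving $\f{EXACT}_k^n$ or $\f{EXACT}_l^n$ up to a constant number of extra queries) should lift the bound to $\max\{n-k,l\}-1$.

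The main obstacle is the lower bound. The plain polynomial method reaches only $\lceil n/2 \rceil$, and the simplest restrictions reach only $\max\{k,n-l\}$; producing a restriction or reduction whose complexity matches $\max\{n-k,l\}-1$ (losing at most one query compared to the known $\f{EXACT}_k^n$ and $\f{EXACT}_l^n$ bounds) is where the argument must work hardest. On the upper-bound side, the nontrivial aspect is ensuring that the two accepting branches share queries coherently, so that the total query count stays at $l+1$ rather than the naive $l+\max\{k,n-k\}$.
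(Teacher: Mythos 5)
There is a genuine gap on both sides; the proposal names the right targets but the ideas that actually close them are missing.

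\textbf{Lower bound.} Your two tools, as described, provably fall short. Symmetrizing the acceptance probability gives a degree-$2T$ polynomial and hence only $T\geq n/2$ (roughly), and restrictions that kill one of the two accepted weights reduce $\f{EXACT}_{k,l}^n$ only to things like $\f{EXACT}_0^{n-l}$, $\f{EXACT}_k^k$, or (setting $n-l+1$ bits to $0$) $\f{EXACT}_k^{l-1}$; these give at best $\max\{k,\,n-l,\,d-1\}$, which for, say, $k=n/3$, $l=2n/3$ is about $n/3$, far below the claimed $\max\{n-k,l\}-1=2n/3-1$. No restriction to a single-weight $\f{EXACT}$ instance can bridge this, so the phrase ``more intricate restrictions should lift the bound'' is where the proof actually has to happen and no mechanism is given. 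The paper's argument (Theorem~\ref{thm:lb}) is different in kind: assume $l\leq n-k$, fix the first $k$ bits to $1$ to get $\f{EXACT}_{0,l-k}^{n-k}$, and then apply the \emph{amplitude} version of the polynomial method as in the $\f{OR}$ lower bound of \cite{BBC+98}. Some accepting basis state has nonzero amplitude on the all-zeros input; that amplitude is a multilinear polynomial of degree at most $t$ (not $2t$), it must vanish on every rejecting input by exactness, and after symmetrization one gets a univariate polynomial that is nonzero at $s=0$ and zero at the $n-k-1$ points $\{1,\ldots,n-k\}\setminus\{l-k\}$, forcing $t\geq n-k-1$. The degree-$t$ (rather than $2t$) feature of the amplitude polynomial is the essential ingredient you are missing.

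\textbf{Upper bound.} The plan to ``modify the $\f{EXACT}_l^n$ algorithm so that it also accepts at weight $k$, using the room provided by $d\geq 4$'' does not correspond to any construction, and the intuition is off: the paper's upper bound algorithm works for \emph{all} $d$ (the hypothesis $d\geq 4$ in the theorem only marks the range where the sharper Theorems~\ref{thm:d1} and~\ref{thm:d23} do not apply). The actual algorithm (Theorem~\ref{thm:alg2}, Section~\ref{sec:alg3}) first pads so that $k+l=n$ (Lemma~\ref{lem:red}) and then iterates a one-query step that, upon measurement, either reveals a balanced pair $x_i\neq x_j$ (shrinking the instance by two bits) or rules out one of the two candidate weights, after which it switches to the known $\max\{k,n-k\}$-query algorithm for the remaining single-weight problem; the worst case is one query to break the tie plus $n-k$ queries for $\f{EXACT}_k^n$ or $\f{EXACT}_{n-k}^n$, giving $\max\{n-k,l\}+1$ after padding. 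You would need to supply this (or an equivalent) concrete mechanism for the ``one extra query'' claim to stand.
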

We conjecture that our lower bound is tight, i.e., that 
\begin{conjecture}
  If $d\geq 2$, then
  \[Q_E(\f{EXACT}_{k,l}^n) = \max{\{n-k,l\}}-1.\]
\end{conjecture}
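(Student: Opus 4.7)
The lower bound $\max\{n-k,l\}-1 \leq Q_E(\f{EXACT}_{k,l}^n)$ is already established in Theorem~\ref{thm:lowd4}, so for $d\geq 4$ the conjecture reduces to improving the upper bound by two queries: from $\max\{n-k,l\}+1$ down to $\max\{n-k,l\}-1$. My starting point would therefore be to open up the algorithm of Theorem~\ref{thm:lowd4}, which I expect is constructed by a padding-style reduction to the $d=1$ algorithm of Theorem~\ref{thm:d1}, and to locate the two extra queries it spends. In parallel, I would carefully contrast this with the algorithm achieving the optimal $\max\{n-k,l\}-1$ in the $d\in\{2,3\}$ case of Theorem~\ref{thm:d23}, since the fact that the bound becomes tight precisely there suggests the algorithm exploits a near-symmetry that is lost as soon as the gap between the two accepting Hamming weights becomes large.

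A first concrete strategy would be induction on $d$. Given an optimal algorithm for $\f{EXACT}_{k,l}^n$, I would try to build one for $\f{EXACT}_{k,l+1}^{n+1}$ (or, symmetrically, $\f{EXACT}_{k-1,l}^{n+1}$) by performing a single classical query on a distinguished bit, branching on the outcome, and reducing to a smaller instance whose complexity differs by at most one. If the conditional subproblems have the form needed, the recursion would preserve the $\max\{n-k,l\}-1$ bound. The base case would be $d=3$ from Theorem~\ref{thm:d23}, and the hard work would be verifying that the branching really does yield a $\f{EXACT}_{k',l'}^{n'}$ with the right parameters rather than a more general symmetric function.

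An alternative is to exploit $\f{EXACT}_{k,l}^n(x)=\f{EXACT}_k^n(x)\vee\f{EXACT}_l^n(x)$ and run the two algorithms of \cite{AIS13} on a shared query oracle, designing the working registers so that their non-accepting branches end in orthogonal states and the final measurement distinguishes all three cases $|x|=k$, $|x|=l$, $|x|\notin\{k,l\}$ without additional queries. The polynomial method suggests this should be possible with $\max\{n-k,l\}-1$ queries because the minimum-degree exact representing polynomial of $\f{EXACT}_{k,l}^n$ already has that degree; in principle one could try to read off an algorithm by solving the corresponding dual adversary SDP or designing a span program whose witness size matches $\max\{n-k,l\}-1$.

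The main obstacle, in my view, is that unlike $\f{EXACT}_k^n$, where the ``factor-of-two'' speedup comes from a single accepting Hamming weight and a clever interference pattern around it, $\f{EXACT}_{k,l}^n$ has two distant accepting weights and the algorithm must behave identically on the entire non-accepting range in between. Building a superposition that simultaneously cancels amplitudes at weights $k+1,\ldots,l-1$ while rotating the amplitudes at $k$ and $l$ to a common output state, all within $\max\{n-k,l\}-1$ queries, seems to require a new kind of interference argument that the padding-based construction in Theorem~\ref{thm:lowd4} does not provide, and I expect this to be where the conjecture genuinely stands or falls.
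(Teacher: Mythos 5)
The statement you are trying to prove is stated in the paper as a \emph{conjecture}: the authors prove it only for $d\in\{2,3\}$ (Theorem~\ref{thm:d23}) and explicitly leave the case $d\geq 4$ open, where they can only show $\max\{n-k,l\}-1\leq Q_E(\f{EXACT}_{k,l}^n)\leq\max\{n-k,l\}+1$. So there is no proof in the paper to compare against, and your proposal does not supply one either: it is a survey of strategies, and your own closing paragraph concedes that the decisive interference argument is missing. That is the gap, and it is the whole gap.

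Two of your concrete suggestions have identifiable problems worth flagging. The induction on $d$ via a single classical query does not work structurally: querying a distinguished bit of $\f{EXACT}_{k,l+1}^{n+1}$ yields the subproblems $\f{EXACT}_{k,l+1}^{n}$ and $\f{EXACT}_{k-1,l}^{n}$, both of which still have gap $l+1-k$; a classical query shifts $k$ and $l$ together and shrinks $n$, but never changes $d$, so you cannot reach the base case $d=3$ this way. What actually reduces the instance in the paper's $d\leq 3$ algorithm is detecting a balanced pair $\hat{x}_i\neq\hat{x}_j$, and the obstruction to extending that construction to $d\geq 4$ is quantitative: the recursion \eqref{eq:gamma} for $\gamma(\sr{UNB-R}_d^n)$ must stay below $1$ along an entire chain, and the authors only exhibit suitable base cases for $d\in\{1,2,3\}$. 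Any real attack on the conjecture has to engage with that recursion (or replace it). Separately, your appeal to the polynomial method reverses an implication: the existence of a degree-$(\max\{n-k,l\}-1)$ representing polynomial is a necessary condition for such an algorithm, not a sufficient one, so it cannot be ``read off'' as an algorithm without solving the dual adversary or span-program problem you mention --- which is again the unexecuted hard part.
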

The lower bound of Theorem~\ref{thm:lowd4} combined with Theorem~\ref{thm:d1} implies that
\[Q_E(\f{EXACT}_{k,l}^n)\geq \frac{n}{2}.\]

Interestingly, the algorithm of Theorem \ref{thm:lowd4} can be used to compute a wide variety of symmetric functions with asymptotically optimal number of queries. Namely, we show
\begin{theorem}
\label{thm:sym}
Let $a\in \{0,1\}^{n+1}$ be a binary string with no 1-s far from its center, i.e. there exists some $g(n)\in o(n)$ such that $|i-\frac{n}{2}|>g(n)\implies a_i=0$. Then,
\[
   Q_E(\f{SYM}_a) = \frac{n}{2}+o(n).
\]
\end{theorem}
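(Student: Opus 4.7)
The plan is to establish matching bounds $n/2 - o(n) \leq Q_E(\f{SYM}_a) \leq n/2 + o(n)$.

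For the lower bound, I would apply the polynomial method. Since $\f{SYM}_a$ is symmetric, there exists a univariate polynomial $p$ of degree $\deg(\f{SYM}_a)$ with $\f{SYM}_a(x) = p(|x|)$. The assumption $a_i = 0$ whenever $|i - n/2| > g(n)$ forces $p(i) = 0$ at all such integers $i \in \{0, 1, \ldots, n\}$, giving $p$ at least $n - 2g(n)$ integer roots. Ruling out the trivial constant case, $p \not\equiv 0$, so $\deg p \geq n - 2g(n)$. The BBC+ polynomial lower bound $Q_E(f) \geq \deg(f)/2$ then yields $Q_E(\f{SYM}_a) \geq n/2 - g(n) = n/2 - o(n)$.

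For the upper bound, I would construct an exact quantum algorithm with $n/2 + o(n)$ queries by leveraging Theorem~\ref{thm:lowd4}. Let $k^* = \min\{j : a_j = 1\}$ and $l^* = \max\{j : a_j = 1\}$; both lie in $[n/2 - g(n), n/2 + g(n)]$, so $\max\{n - k^*, l^*\} \leq n/2 + g(n)$. Depending on $d^* = l^* - k^*$, one of Theorem~\ref{thm:lowd4} (when $d^* \geq 4$), Theorems~\ref{thm:d1}/\ref{thm:d23} (for smaller $d^*$), or the $\f{EXACT}_k^n$ algorithm of \cite{AIS13} (when $k^* = l^*$) provides an algorithm for $\f{EXACT}_{k^*, l^*}^n$ in at most $n/2 + g(n) + O(1) = n/2 + o(n)$ queries. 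The crucial remaining step is to show that, after these queries, the algorithm's final quantum state contains sufficient information about $|x|$ to allow a modified measurement to decide $|x| \in \operatorname{supp}(a)$ rather than merely $|x| \in \{k^*, l^*\}$, without any further queries.

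The main obstacle is precisely this repurposing: upgrading the two-value EXACT decision to membership in the arbitrary subset $\operatorname{supp}(a) \subseteq [n/2 - g(n), n/2 + g(n)]$. A naive alternative of running the algorithm independently on each pair in $\operatorname{supp}(a)$ and combining the outputs by classical OR would cost $|\operatorname{supp}(a)|/2 \cdot (n/2 + o(n)) = o(n^2)$ queries, far exceeding the target. Resolving this requires a structural inspection of the algorithm of Theorem~\ref{thm:lowd4}, demonstrating that a single execution produces a state encoding enough about $|x|$ (within the central band) for the broader symmetric question to be answered by one carefully chosen measurement, thereby keeping the overall query count at $n/2 + o(n)$.
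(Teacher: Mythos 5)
Your lower bound is correct and is essentially the paper's: the representing univariate polynomial of $\f{SYM}_a$ must vanish at all integers $i\in\{0,\dots,n\}$ with $|i-\frac{n}{2}|>g(n)$, hence has degree at least $n-2g(n)$, and $Q_E\geq \deg/2$ gives $\frac{n}{2}-o(n)$ (the paper simply cites von zur Gathen--Roche and the polynomial method for this).

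The upper bound, however, contains a genuine gap, and you have located it yourself: the ``repurposing'' of the final state of an $\f{EXACT}_{k^*,l^*}^n$ algorithm to decide membership in all of $\operatorname{supp}(a)$ is exactly the missing step, it is never carried out, and it is not how the paper argues. The paper's mechanism is finer-grained: it does not run any $\f{EXACT}_{k,l}^n$ algorithm to completion during the elimination phase. Instead it uses the fact that a \emph{single iteration} of the algorithm of Section~\ref{sec:alg3} --- one query, a unitary, a measurement --- has three outcomes: it finds a balanced pair $x_i\neq x_j$ (shrinking the problem by two bits), or it rules out the weight $k$, or it rules out the weight $l$. Stage~1 of the paper's algorithm therefore repeatedly selects two surviving candidate weights $i<j$ with $a_i=a_j=1$ that are symmetric about the center of the current, suitably zero-padded input (the test is an unbalance test for $\pm d$, so this symmetrization by padding and a sweeping midpoint is necessary bookkeeping your sketch omits), runs one such step, and either crosses off one weight or shrinks the input. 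Since $|\operatorname{supp}(a)|\leq 2g(n)+1$, the ``cross off a weight'' branch fires at most $2g(n)$ times in total, while each ``shrink'' branch, though it costs a query, reduces the cost of the concluding $\f{EXACT}_{k'}^{n'}$ stage by one query; the two stages together then cost $\frac{n}{2}+O(g(n))$. By contrast, your plan of extracting the needed information from one completed run is unsupported: an exact algorithm's output amplitude is pinned to $0$ or $1$ on every input, and nothing guarantees its residual state distinguishes the up to $2g(n)-1$ remaining candidate weights, so without the per-step interleaving the argument does not close.
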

Since $D(\f{SYM}_a)=n$ for any such function $\f{SYM}_a$ (except for one that is 0 on all inputs), we obtain a factor-$(2-o(1))$ advantage for exact quantum algorithms for any
such $\f{SYM}_a$.

The outline for the rest of the paper is as follows. We describe the lower bound parts of Theorems \ref{thm:d1}, \ref{thm:d23} and \ref{thm:lowd4} in section \ref{sec:lb} and the algorithms for these theorems in section \ref{sec:thealg}. The algorithm for Theorem \ref{thm:sym} is given in Appendix \ref{sec:symalg}.
%%%%%%%%%%%%%%%%%%%%%%%%%%%%%%%%%%%%%%%%%%%%%%%%
%%%%%%%%%%%%%%%%%%%%%%%%%%%%%%%%%%%%%%%%%%%%%%%%

\section{The lower bounds}
\label{sec:lb}

\subsection{Proofs of the lower bound theorems}
\begin{theorem}
  \label{thm:lb}
  If $d\geq 1$, then
  \[Q_E(\f{EXACT}_{k,l}^n)\geq \max{\{n-k,l\}}-1.\]
\end{theorem}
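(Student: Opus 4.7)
The plan is to use the strong polynomial-method lower bound for exact quantum query complexity. After $T$ queries the accepting amplitudes $\alpha_z(x)$ are multilinear polynomials in $x$ of degree at most $T$, and since $\sum_z|\alpha_z(x)|^2 = f(x) \in \{0,1\}$, each accepting $\alpha_z$ must vanish on $f^{-1}(0)$ while at least one is nonzero at every $x \in f^{-1}(1)$. This gives
\[
Q_E(f) \geq D(f) := \min\bigl\{\deg(\alpha) : \alpha \not\equiv 0 \text{ multilinear on } \{0,1\}^n,\ \alpha|_{f^{-1}(0)} \equiv 0\bigr\}.
\]
Thus it suffices to show $D(\f{EXACT}_{k,l}^n) \geq \max\{n-k,l\}-1$. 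This is strictly stronger than the standard $\deg(f)/2$ polynomial bound, which for our function yields only about $n/2$.

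By the bit-flip symmetry $x \mapsto \mathbf{1}-x$, which identifies $\f{EXACT}_{k,l}^n$ with $\f{EXACT}_{n-l,n-k}^n$ and preserves $D$, I assume $l \geq n-k$ and target $D \geq l-1$. The argument is by induction on $n$. Given a minimal-degree nonzero $\alpha$, decompose $\alpha = \alpha_0 + x_n\alpha_1$ with $\alpha_0,\alpha_1$ multilinear in $x_1,\ldots,x_{n-1}$. The restrictions $\alpha|_{x_n=0} = \alpha_0$ and $\alpha|_{x_n=1} = \alpha_0+\alpha_1$ vanish on $\{|x'| \notin \{k,l\}\}$ and $\{|x'| \notin \{k-1,l-1\}\}$ respectively within $\{0,1\}^{n-1}$, each witnessing a smaller $\f{EXACT}$-type function. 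In the easy cases $\alpha_0 \equiv 0$ or $\alpha_0+\alpha_1 \equiv 0$, applying the inductive hypothesis to the surviving piece (accounting for the extra $x_n$ or $1-x_n$ factor) yields $\deg(\alpha) \geq l-1$ directly. The boundary case $l = n$ serves as a base: for $k=0$ it is $\f{EQUALITY}_n$ with $D = n-1$ witnessed by $\prod_i(1-x_i) + (-1)^{n+1}\prod_i x_i$ together with a matching direct lower bound, and for $k \geq 1$ one fixes $x_n = 0$ to reduce to $\f{EXACT}_k^{n-1}$ with the known tight value $D(\f{EXACT}_k^{n-1}) = \max\{k,n-1-k\}$ from prior work.

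The main obstacle is the case where $\alpha_0$, $\alpha_1$, and $\alpha_0+\alpha_1$ are all simultaneously nonzero. Each inductive bound then constrains one restriction, but $\alpha_1 = (\alpha_0+\alpha_1) - \alpha_0$ may have low degree via top-coefficient cancellation, so $\deg(\alpha) = \max\{\deg\alpha_0,\, 1+\deg\alpha_1\}$ is not immediately pinned at $l-1$. I would close this by exploiting the couplings between $\alpha_0$ and $\alpha_1$ on the weight slices $k-1,k,l-1,l$ of $\{0,1\}^{n-1}$: on slices $k$ and $l$ the vanishing of $\alpha_0+\alpha_1$ forces $\alpha_1 = -\alpha_0$, while on slices $k-1$ and $l-1$ one has $\alpha_0 = 0$. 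Either by iterating the single-variable restriction (choosing the splitting variable adaptively to force a degenerate case) or by a linear-algebraic argument showing that top-degree cancellation in $\alpha_1$ creates an extra weight slice on which $\alpha$ vanishes---contradicting minimality---one closes the induction. This interlocking structural analysis is the technical heart of the argument.
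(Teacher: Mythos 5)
Your opening reduction is where the argument breaks. You correctly observe that an exact $T$-query algorithm yields, for \emph{every} $x\in f^{-1}(1)$, a nonzero multilinear polynomial of degree at most $T$ that vanishes on $f^{-1}(0)$ and is nonzero \emph{at that particular} $x$ --- but you then discard this and pass to $D(f)$, the minimum degree over \emph{all} nonzero polynomials vanishing on $f^{-1}(0)$. That minimum is attained by a polynomial which is free to live entirely on whichever of the two accepted weight levels is ``cheaper'', and the resulting quantity is genuinely smaller than $\max\{n-k,l\}-1$ in general. Concretely, for $\f{EXACT}_{3,9}^{10}$ the polynomial
\[
(x_1-x_2)(x_3-x_4)(x_5-x_6)(1-x_7)(1-x_8)(1-x_9)(1-x_{10})
\]
is multilinear of degree $7$, is nonzero at $(1,0,1,0,1,0,0,0,0,0)$, and vanishes on every input of weight $\neq 3$, hence on all of $f^{-1}(0)$; yet $\max\{n-k,l\}-1=8$. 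So the intermediate statement $D(\f{EXACT}_{k,l}^n)\geq\max\{n-k,l\}-1$ that you set out to prove is false, and no amount of work on the induction --- whose hardest case (all of $\alpha_0$, $\alpha_1$, $\alpha_0+\alpha_1$ nonzero) you in any event leave open, describing it only as ``the technical heart'' --- can close it.

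The fix is precisely the leverage you threw away: choose the 1-input of the \emph{right} weight. If $l\leq n-k$, take the input with the first $k$ bits equal to $1$ and the rest $0$; exactness guarantees some accepting amplitude $p$ is nonzero there. Fix those $k$ bits to $1$: the restricted polynomial on the remaining $n-k$ variables is nonzero at the all-zeros point and vanishes on every Hamming weight in $\{1,\dots,n-k\}\setminus\{l-k\}$. Symmetrizing preserves both properties (the all-zeros point is fixed by every permutation, and the vanishing set is permutation-invariant), and passing to a univariate polynomial $q(s)$ in the weight $s$ gives a nonzero polynomial with at least $n-k-1$ roots, hence $\deg q\geq n-k-1$; since $\deg q\leq T$, the bound follows, and the case $l\geq n-k$ is symmetric under bit-flip. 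This restriction-plus-symmetrization argument is the paper's entire proof and avoids your induction altogether.
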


This theorem provides the lower bound part for Theorems \ref{thm:d23} and \ref{thm:lowd4}.

\begin{proof}[Proof of Theorem \ref{thm:lb}]
  Consider the function $\f{EXACT}_{k,l}^n$ with $l\leq n-k$ ($l \geq n-k$ is symmetric and gives the $l-1$ result in the theorem). If the first $k$ input bits are ones, a quantum algorithm computing $\f{EXACT}_{k,l}^n$ must be computing $\f{EXACT}_{0,l-k}^{n-k}$ on the remaining $n-k$ input bits. Next we proceed similarly as in the lower bound via polynomials for $\f{OR}_n$ function\cite{BBC+98}. There must exist a state $\ket{\psi(x)}\in \H_Q \otimes \H_W \otimes \ket{1}$ which for $x=(0,\ldots,0)$ is non-zero at the end of the computation. If the algorithm performs $t$ queries, then the amplitude of the state $\ket{\psi(x)}$ can be expressed as a degree $\leq t$ multilinear polynomial in $\hat{x}$:
  \[p(\hat{x}_1,\ldots, \hat{x}_n)=\sum_{\substack{S:S\subseteq [n]\\ |S|\leq t}}{\alpha_S\prod_{i\in S}{\hat{x}_i}}.\]
  Let $p_{sym}$ be the symmetric polynomial
  \[p_{sym}(\hat{x}_1,\ldots, \hat{x}_n)=\sum_{\pi\in S_n}{\frac{p(\hat{x}_{\pi(1)},\ldots, \hat{x}_{\pi(n)})}{n!}}.\]
  Crucially, for the inputs $x\in \{(0,\ldots, 0)\}\cup \{x|\f{EXACT}_{0,l-k}^{n-k}(x)=0\}$:
  \[p_{sym}(\hat{x}_1,\ldots,\hat{x}_n)=p(\hat{x}_1,\ldots, \hat{x}_n).\]
  By assigning $s:=\frac{n-(\hat{x}_1+\ldots+\hat{x}_n)}{2}$ we can obtain a polynomial $q(s)$ that for all $\hat{x}\in \{-1,1\}^n$:
  \[q\left(\frac{n-(\hat{x}_1+\ldots+\hat{x}_n)}{2}\right)=p_{sym}(\hat{x}_1,\ldots, \hat{x}_n). \]
  The polynomial $q$ is therefore non-zero on $s=0$ and zero on $s\in \{0,1,\ldots, n-k\}\backslash \{0,l-k\}$. Thus it is a non-zero polynomial of degree at least $n-k-1$. On the other hand the degree of $q$ is at most $t$. Thus $n-k-1\leq \deg{q}\leq t$.
\end{proof}
This lower bound is not tight when $d=1$ and $l=n-k$. In this case we use a more sophisticated approach and give a different but possibly more insightful proof.

\begin{theorem}
  \label{thm:lbd1}
  If $d=1$, $n>1$ and $l=n-k$, then for $EXACT_{k,l}^n=EXACT_{k,k+1}^{2k+1}$
  \[Q_E(\f{EXACT}_{k,k+1}^{2k+1})\geq k+1.\]
\end{theorem}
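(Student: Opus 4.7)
The polynomial-method lower bound of Theorem~\ref{thm:lb} already yields $Q_E \geq k$, so it suffices to rule out an exact algorithm making exactly $k$ queries. The plan is to exploit the sum-of-squares (SOS) structure carried by the amplitude polynomials of an exact quantum algorithm, which is strictly stronger than merely having a degree bound on the representing polynomial of $f$.

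Suppose for contradiction that a $k$-query exact algorithm for $\f{EXACT}_{k,k+1}^{2k+1}$ exists. By the standard amplitude-polynomial argument, the amplitudes $\{q_z(\hat x)\}$ associated to the reject output are polynomials of degree at most $k$ in $\hat x_1, \ldots, \hat x_{2k+1}$ with $\sum_z |q_z(\hat x)|^2 = 1 - f(x)$. Since $1 - f$ vanishes exactly on the two Hamming-weight slices $w = k$ and $w = k+1$, every $q_z$ must vanish identically on both slices. After the standard random-permutation symmetrization of the algorithm (which preserves exactness because $f$ is symmetric and uses no additional queries), we may assume the family $\{q_z\}$ is $S_{2k+1}$-invariant, and then decompose each $q_z$ into $S_{2k+1}$-isotypic components.

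The key observation is that the trivial-representation (symmetric) part of each $q_z$ is itself a symmetric polynomial of degree $\leq k$ in $2k+1$ variables that vanishes on slices $k$ and $k+1$. Such symmetric polynomials form only a $(k-1)$-dimensional parameter family, since the full space of symmetric polynomials of degree $\leq k$ has dimension $k+1$ (basis $1, e_1, \ldots, e_k$) and the two slice-vanishing conditions impose two independent linear constraints. Evaluating these restricted polynomials at the remaining weights $w \in \{0, 1, \ldots, k-1, k+2, \ldots, 2k+1\}$ and writing out the SOS conditions $\sum_z Q_z(w)^2 = 1 - f(w)$ gives an overdetermined linear system in the Gram matrix entries of the symmetric parameter vectors; already the first few slices force incompatible equalities (for $k = 2$ the unique symmetric polynomial $3 - 2e_1 + e_2$ gives values $\{3, 1, 0, 0, 1, 3\}$ on weights $0, \ldots, 5$, yielding the contradictory pair $9A = 1$, $A = 1$).

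The main obstacle will be ruling out non-symmetric $S_{2k+1}$-isotypic contributions to the $q_z$ (these are empty for $k \leq 2$ but become non-trivial for $k \geq 3$). My approach here is to note that each non-symmetric component contributes only non-negatively to the slice-constant part of $\sum_z |q_z|^2$, so it can only increase the LHS of the SOS constraints, never decrease it. Combined with the overdetermined nature of the sign-restricted symmetric system (in which at least one slice-value must be strictly smaller than another by a fixed combinatorial ratio) and a careful Johnson-scheme analysis of the allowed ratios between slice-values of a polynomial in the non-symmetric part of the kernel, this should yield the infeasibility uniformly in $k$ and establish $Q_E(\f{EXACT}_{k,k+1}^{2k+1}) \geq k+1$.
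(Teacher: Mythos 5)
Your overall strategy (amplitude polynomials, SOS constraint $\sum_z|q_z|^2=1-f$, symmetrization, decomposition into $S_{2k+1}$-isotypic components) is the same opening as the paper's, but the proof has a genuine gap exactly where you flag "the main obstacle": the non-trivial isotypic components are not actually controlled. Your only handle on them is that they "contribute non-negatively" to $\mathrm{Sym}(\sum_z|q_z|^2)$. That alone cannot close the argument: nothing you say prevents the trivial components from being identically zero, in which case the entire profile $1-f(w)$ is carried by the non-trivial components and your Gram-matrix system for the symmetric parts is vacuously feasible ($G=0$). The promised "careful Johnson-scheme analysis of the allowed ratios" is precisely the missing theorem, and it is nontrivial; the paper supplies it as Blekherman's theorem, which says that the symmetrization of $p^2$ for $\deg p\le t\le n/2$ equals $\sum_{j=0}^{t}p_{t-j}(s)\prod_{0\le i<j}(s-i)(n-s-i)$ with each $p_{t-j}$ a univariate sum of squares. (Also, your side remark that non-symmetric components are "empty for $k\le 2$" is not true at the level of the ambient space — e.g.\ $\hat x_i-\hat x_j$ spans a standard representation inside degree-$1$ polynomials — though after imposing vanishing on both slices it happens to hold for small $k$; this would itself need proof.)

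There is a second, more structural problem: even granting full control of the isotypic decomposition, a contradiction cannot come from a linear system over the \emph{integer} slice values alone. A univariate polynomial of degree $2k$ interpolating the values $0$ at $s=k,k+1$ and $1$ at the other $2k$ integer weights exists trivially, and for $k\ge 5$ your Gram system has more free parameters ($\binom{k}{2}$) than equations ($\le 2k$), so "overdetermined" fails by counting. What actually kills the $k$-query algorithm in the paper is a \emph{real-variable} consequence of the SOS structure: each Blekherman term is non-negative on the real interval $[k-1,k+2]$, so the univariate polynomial $q(s)$ is non-negative there, hence has local minima (double roots) at $s=k$ and $s=k+1$ and a local maximum in between, and together with the $2k-2$ critical points forced by Rolle between the weights where $q=1$, this gives $2k+1$ critical points and $\deg q\ge 2k+2>2k$. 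To repair your proof you would need to either import Blekherman's theorem (or reprove its non-negativity-on-an-interval consequence) and then run this extrema count, rather than a feasibility analysis at integer points.
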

Theorem~\ref{thm:lbd1} yields a lower bound that is better by one query than Theorem~\ref{thm:lb}, which yields a lower bound of $k$.

To show Theorem~\ref{thm:lbd1}, we use an unpublished result by Blekherman.
\begin{theorem}[Blekherman]
\label{thm:blek}
Let $q(\hat x)$ be the symmetrization of a polynomial $p^2(\hat x_1, \ldots, \hat x_n)$ 
where $p(\hat x)$ is a polynomial of degree $t\leq \frac{n}{2}$.
Then, over the Boolean hypercube $\hat{x}\in\{-1,1\}^n$,
\[ q(\hat x) = \sum_{j=0}^t p_{t-j}(|x|) \left( \prod_{0\leq i < j} (|x|-i) (n-|x|-i) \right) \]
where $p_{t-j}$ is a univariate polynomial that is a sum of squares of polynomials of degree at most $t-j$
and $|x|$ denotes the number of variables $i:\hat x_i=-1$.
\end{theorem}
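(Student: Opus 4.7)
The plan is to decompose $p$ according to the $S_n$-action on multilinear polynomials on $\{-1,1\}^n$, symmetrize each isotypic component separately, and then extract the sum-of-squares structure via a Gram-matrix / Cholesky argument. The building block is the degree-$j$ polynomial $\eta_j(\hat x) = \prod_{i=1}^{j}(\hat x_{2i-1}-\hat x_{2i})$. Because $(\hat x_{2i-1}-\hat x_{2i})^2$ equals $4$ on differing pairs and $0$ otherwise, directly counting the permutations of $[n]$ that place a differing pair at each of the $j$ slots gives
\[
\S(\eta_j^2)(\hat x) \;=\; \frac{8^{j}(n-2j)!}{n!}\,\prod_{i=0}^{j-1}(|x|-i)(n-|x|-i).
\]
Multiplying $\eta_j$ by any univariate $f$ of degree $\leq t-j$ yields a polynomial $f(|x|)\eta_j$ of $\hat x$-degree $\leq t$ whose squared symmetrization, $f(|x|)^2\S(\eta_j^2)$, already has the target form for a single term.

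Next I would reduce the general case to sums of such single-block terms via representation theory. Under the $S_n$-action the space $P_t$ of multilinear polynomials of degree $\leq t$ decomposes into isotypic components $P_t = \bigoplus_{j=0}^t V_j$, where $V_j$ is the sum of all copies of the two-row Specht module $S^{(n-j,j)}$; the hypothesis $t\leq n/2$ ensures these components are distinct and nontrivial. A concrete spanning set of $V_j$ is obtained by taking all $A(|x|)\cdot(\sigma\eta_j)(\hat x)$ with $A$ a univariate polynomial of degree $\leq t-j$ and $\sigma$ ranging over coset representatives of the stabilizer of $\eta_j$. Because each of the bilinear forms $(u,v)\mapsto\S(uv)(\hat x)$ is $S_n$-invariant and the $V_j$ are pairwise non-isomorphic as $S_n$-modules, Schur's lemma forces $\S(uv)=0$ whenever $u,v$ lie in distinct components. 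Thus writing $p = \sum_{j=0}^t p_{(j)}$ with $p_{(j)}\in V_j$ gives $\S(p^2)=\sum_j \S(p_{(j)}^2)$.

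Fix $j$ and expand $p_{(j)} = \sum_\sigma A_\sigma(|x|)\,\sigma\eta_j$, so that
\[
\S(p_{(j)}^2) \;=\; \sum_{\sigma,\sigma'} A_\sigma(|x|)A_{\sigma'}(|x|)\,\S(\sigma\eta_j\cdot\sigma'\eta_j).
\]
Each product $\sigma\eta_j\cdot\sigma'\eta_j$ has $\hat x$-degree $\leq 2j$ after multilinearization, so its symmetrization is a polynomial in $|x|$ of degree $\leq 2j$; on the other hand it must vanish for every $|x|\in\{0,\dots,j-1\}\cup\{n-j+1,\dots,n\}$, since then there are not enough $\pm 1$ entries to make both $\sigma\eta_j$ and $\sigma'\eta_j$ nonzero at any point of the orbit. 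These $2j$ forced roots exhaust the degree budget, so $\S(\sigma\eta_j\cdot\sigma'\eta_j) = R^{(j)}_{\sigma,\sigma'}\prod_{i=0}^{j-1}(|x|-i)(n-|x|-i)$ with $R^{(j)}_{\sigma,\sigma'}$ a \emph{constant}. The Gram matrix $R^{(j)}=(R^{(j)}_{\sigma,\sigma'})$ is positive semi-definite: for any scalars $c_\sigma$, $\S\bigl((\sum_\sigma c_\sigma\sigma\eta_j)^2\bigr)$ is pointwise nonnegative, and dividing through by the strictly positive $\prod_i(|x|-i)(n-|x|-i)$ at any $|x|\in[j,n-j]$ yields $\sum_{\sigma,\sigma'}c_\sigma c_{\sigma'}R^{(j)}_{\sigma,\sigma'}\geq 0$. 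A Cholesky factorization $R^{(j)}=L^T L$ then gives $\S(p_{(j)}^2) = \|L\,A(|x|)\|^2\prod_{i=0}^{j-1}(|x|-i)(n-|x|-i)$, and $\|L\,A(|x|)\|^2$ is a sum of squares of polynomials in $|x|$ of degree $\leq t-j$. Summing over $j$ produces the claimed decomposition.

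The main obstacle is the representation-theoretic step: pinning down the isotypic decomposition $P_t=\bigoplus V_j$ explicitly and verifying that the spanning family $\{A(|x|)\sigma\eta_j\}$ exhausts $V_j$ (rather than a proper $S_n$-submodule), together with the book-keeping that multiplication by $A(|x|)$ does not drop degrees unexpectedly after multilinearization. This is where the bound $t\leq n/2$ is essential (so that the Specht modules $S^{(n-j,j)}$ are nonzero and pairwise distinct for $0\leq j\leq t$); once that structural step is in place, the rest reduces to the concrete $\S(\eta_j^2)$ computation above and the standard PSD/Cholesky trick for extracting an SOS representation.
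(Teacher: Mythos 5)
Your proposal is correct in substance and rests on the same two pillars as the paper's proof: the decomposition of the multilinear polynomials into the two-row isotypic components (the paper's Lemma~\ref{lem:irred}, which the paper itself cites to an external source rather than proving, so your deferred ``main obstacle'' is no worse a gap than the paper's own citation), and an explicit count for the symmetrized products of squared differences, where your formula $\frac{8^j(n-2j)!}{n!}\prod_{i}(|x|-i)(n-|x|-i)$ agrees with the paper's conditional-probability computation. Where you genuinely diverge is in how the sum-of-squares structure is extracted. The paper never decomposes $p$ itself: it symmetrizes the rank-one Gram matrix of $p^2$ into a positive semidefinite matrix $Q$ commuting with all $U_\pi$, spectrally decomposes $Q$ into eigenprojectors, refines these into projectors onto irreducible invariant subspaces, and uses Schur's lemma to identify each such projector with a constant multiple of $\sum \overrightarrow{p}_{i_1,\ldots,j_b}\overrightarrow{p}^T_{i_1,\ldots,j_b}$; positivity of the eigenvalues then yields the SOS for free and only the ``diagonal'' expectation $\mathbb{E}\left[(\hat x_{i_1}-\hat x_{j_1})^2\cdots(\hat x_{i_b}-\hat x_{j_b})^2\right]$ ever needs to be computed. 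You instead decompose $p$ into isotypic pieces, kill cross-block terms with Schur's lemma applied to the invariant bilinear forms $(u,v)\mapsto \S(uv)$, and within each block build the full Gram matrix $R^{(j)}$; your degree-versus-forced-roots argument showing that every entry $\S(\sigma\eta_j\cdot\sigma'\eta_j)$ is a \emph{constant} multiple of $\prod_{i=0}^{j-1}(|x|-i)(n-|x|-i)$ is a clean elementary substitute for the paper's projector identification, and the pointwise-nonnegativity-plus-division argument for positive semidefiniteness followed by Cholesky replaces the paper's spectral decomposition. The trade-off is that your route must control all off-diagonal symmetrized products and verify PSD-ness by hand (both of which you do correctly, using $j\leq n/2$ to guarantee a point where the product factor is strictly positive), whereas the paper pushes that work into the single representation-theoretic fact that an intertwiner on an irreducible is a scalar.
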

See \cite{LP+16} for a proof of Blekherman's theorem. Furthermore, we provide a considerably shorter proof in the next subsection.

\begin{proof}[Proof of Theorem \ref{thm:lbd1}]
  Let us consider the negation of the function $\f{EXACT}_{k,k+1}^{2k+1}$. Assuming, towards a contradiction, that there exists a quantum algorithm computing the function with $k$ queries, there exists a sum of squares representation of $\f{NOT-EXACT}_{k,k+1}^{2k+1}$:
  \[\f{NOT-EXACT}_{k,k+1}^{2k+1}(x)=\sum_i{r_i^2(\hat{x})},\]
  such that $\deg{r_i}\leq k$. Since the function is symmetric, the symmetrization is also a valid representation. Since $Sym(\sum_i{r_i^2(\hat{x})})=\sum_i{Sym(r_i^2(\hat{x}))}$, it follows from Blekherman's theorem that there is a univariate polynomial of the form
  \begin{equation}
    \label{eq:symsos}
  q(|x|)=\sum_{j=0}^k {p_{k-j}(|x|) \left( \prod_{i=0}^{j-1} (|x|-i) (n-|x|-i) \right)},
  \end{equation}
  where $q(|x|)=\f{NOT-EXACT}_{k,k+1}^{2k+1}(x)$ on the Boolean hypercube and $p_{k-j}$ are sum of squares polynomials with $\deg{p_{k-j}}\leq 2k-2j$. The polynomial $q(|x|)$ is non-negative in the interval $|x|\in [k-1,k+2]$. Since the polynomial is 0 at $|x|=k$ and $|x|=k+1$, it must have at least 3 local extrema in the interval $|x|\in[k,k+1]$. Additionally, it is 1 when $|x|\in\{0,1,\ldots,n\}\backslash\{k,k+1\}$, hence it has $2k-2$ more extrema. In total the polynomial has at least $2k+1$ local extrema, therefore its degree is at least $2k+2$. On the other hand by our assumption $\deg{q}\leq 2k$ which is a contradiction.
\end{proof}

%%%%%%%%%%%%%%%%%%%%%%%%%%%%%%%%%%%%%%%%%%%%%%%%%
%%%%%%%%%%%%%%%%%%%%%%%%%%%%%%%%%%%%%%%%%%%%%%%%%

\subsection{Proof of Blekherman's theorem}

\subsubsection{Group representation}

Let $H_\wp$ be a Hilbert space with basis states $\hat x_{S}$ (for all $S\subseteq [n]$) corresponding to monomials $\prod_{i\in S}\hat  x_i$. Then, the vectors in $H_\wp$ correspond to multilinear polynomials in variables $\hat x_i$.
We consider a group representation of the symmetric group $\mathfrak{S}_n$ on $H_\wp$ with transformations $U_{\pi}$ defined by 
$U_{\pi}\hat  x_{S}=\hat x_{\pi(S)}$.
The irreducible representations contained in $H_\wp$ are well known:

Let $S_m(\hat{x}_1, \ldots, \hat{x}_n)=\sum_{i_1, \ldots, i_m} \hat{x}_{i_1} \ldots \hat{x}_{i_m}$ be the $m^{\rm th}$ elementary symmetric polynomial. We use $S_0(\hat{x}_1, \ldots, \hat{x}_n)$ to
denote the constant 1.

\begin{lemma}
  \label{lem:irred}
 A subspace $H\subseteq H_\wp$ is irreducible if and only if there exist $b$ and $\alpha_{m}$ for $m= 0, 1, \ldots, n-2b$ such that
  $H$ is spanned by vectors $\overrightarrow{p}_{i_1, \ldots, j_b}$  corresponding to polynomials
$p_{i_1, \ldots, j_b}$  (for all choices of pairwise distinct $i_1, j_1, \ldots, i_b, j_b\in [n]$) where
  \[ p_{i_1, \ldots, j_b}(\hat{x}_1, \ldots, \hat{x}_n) = (\hat{x}_{i_1}-\hat{x}_{j_1}) \ldots (\hat{x}_{i_b}-\hat{x}_{j_b}) \sum_{m=0}^{n-2b} \alpha_m S_m(\hat{x}') \]
and $\hat{x}'\in\{-1, 1\}^{n-2b}$ consists of all $\hat{x}_i$ for $i\in [n]$, $i\notin\{i_1, \ldots, j_b\}$.
\end{lemma}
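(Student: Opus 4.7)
The plan is to produce a complete decomposition of $H_\wp$ into $\mathfrak{S}_n$-irreducibles and identify each summand with a space of the stated polynomial form. First, I would split $H_\wp$ by total degree, writing $H_\wp = \bigoplus_{k=0}^n H_k$ where $H_k$ is the span of $\hat x_S$ with $|S|=k$. Each $H_k$ is $\mathfrak{S}_n$-invariant and coincides as a representation with the permutation action on $k$-subsets of $[n]$. By Young's rule (or a direct character calculation), $H_k \cong \bigoplus_{b=0}^{\min(k, n-k)} V_b$ where $V_b$ denotes the Specht module attached to the two-row partition $(n-b, b)$. Summing over $k$, the isotypic component of $V_b$ in $H_\wp$ has multiplicity exactly $n - 2b + 1$.

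Next, for each $b$ and each coefficient vector $\alpha = (\alpha_0, \ldots, \alpha_{n-2b})$, I would define $H_{b, \alpha}$ to be the span of the polynomials $p_{i_1, \ldots, j_b}$ appearing in the statement. It is $\mathfrak{S}_n$-invariant since the group action simply permutes the indices of the generators. To identify $H_{b, \alpha}$ with $V_b$, I would observe that each generator $p_{i_1, \ldots, j_b}$ transforms by the sign representation under the $b$ commuting transpositions $(i_k\, j_k)$ and is symmetric under interchange of pairs, which matches the construction of a polytabloid for the two-row Young tableau with bottom row $\{j_1, \ldots, j_b\}$. By standard Specht module theory, the $\mathfrak{S}_n$-orbit of such a vector spans an irreducible submodule isomorphic to $V_b$.

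Third, I would argue that every irreducible subrepresentation arises this way. The degree-$(b+m)$ component of $p_{i_1, \ldots, j_b}$ carries exactly the factor $\alpha_m$, so different (non-proportional) coefficient vectors produce distinct submodules whose degree profiles differ. The map $\alpha \mapsto H_{b,\alpha}$ therefore sweeps out the full $(n-2b+1)$-dimensional multiplicity space of the $V_b$-isotypic component of $H_\wp$. The bookkeeping check $\sum_{b=0}^{\lfloor n/2\rfloor} (\dim V_b)(n-2b+1) = 2^n = \dim H_\wp$ then confirms exhaustiveness, so every irreducible subspace must be of the stated form for some $(b,\alpha)$.

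The main obstacle I anticipate is the irreducibility and precise identification of $H_{b, \alpha}$ with $V_b$ without invoking heavy representation-theoretic machinery. A concrete route is to fix canonical indices $(i_k, j_k) = (2k-1, 2k)$, exhibit $p_{1,2,\ldots,2b-1,2b}$ as a Young-symmetrized vector, and verify that the $\mathfrak{S}_n$-submodule it generates has the expected dimension $\binom{n}{b} - \binom{n}{b-1}$ by producing an explicit basis indexed by $b$-subsets of $[n]$ subject to Garnir relations for the shape $(n-b, b)$. Once irreducibility is pinned down, the dimension count in the previous paragraph finishes the argument.
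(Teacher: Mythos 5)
The paper itself does not prove this lemma --- it simply cites \cite{Bel15} --- so there is no in-paper argument to measure yours against. On its own terms your plan is the standard representation-theoretic route and is essentially sound: the grading $H_\wp=\bigoplus_k H_k$ with $H_k$ the permutation module $M^{(n-k,k)}$ on $k$-subsets, Young's rule giving $M^{(n-k,k)}\cong\bigoplus_{b\le\min(k,n-k)}S^{(n-b,b)}$ and hence multiplicity $n-2b+1$ for each two-row Specht module in $H_\wp$, and the recognition of $p_{i_1,\ldots,j_b}$ as a column-antisymmetrized, row-symmetric vector for the shape $(n-b,b)$ are all correct. (An equivalent bookkeeping, probably closer to \cite{Bel15}, is to note $H_\wp\cong(\mathbb{C}^2)^{\otimes n}$ with $\mathfrak{S}_n$ permuting tensor factors, so Schur--Weyl duality delivers the multiplicity $n-2b+1$ as the dimension of the corresponding $\mathfrak{sl}_2$-irreducible.)

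Two steps still need to be nailed down. First, the one you flag yourself: column antisymmetry and symmetry under interchange of pairs show that $p_{i_1,\ldots,j_b}$ \emph{looks like} a polytabloid, but to conclude that the span of the orbit is irreducible and isomorphic to $V_b$ you must verify that the assignment $e_t\mapsto p_t$ respects the Garnir relations (equivalently, that $p_t$ lies in the image of the full Young symmetrizer), since a merely column-antisymmetric vector can a priori generate a strictly larger module. Second, exhaustiveness should not rest on the count $\sum_b(\dim V_b)(n-2b+1)=2^n$; that identity confirms the isotypic decomposition but does not by itself show that every line of the multiplicity space is hit by some $\alpha$. The clean finish is linear-algebraic: $\alpha\mapsto p^{(\alpha)}_{t_0}$ (for a fixed tableau $t_0$) is a \emph{linear} map of $\mathbb{C}^{n-2b+1}$ into $\{\phi(e_{t_0}):\phi\in\mathrm{Hom}_{\mathfrak{S}_n}(S^{(n-b,b)},H_\wp)\}$, a space of dimension exactly $n-2b+1$; it is injective because the degree-$(b+m)$ graded component of $p^{(\alpha)}_{t_0}$ equals $\alpha_m$ times a fixed nonzero vector. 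A linear injection between spaces of equal dimension is a bijection, every irreducible submodule of the isotypic component is the image of some nonzero $\phi$, and hence equals $H_{b,\alpha}$ for some $\alpha$. With these two points made explicit your argument is complete.
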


See \cite{Bel15} for a short proof of Lemma \ref{lem:irred}.

\subsubsection{Decomposition of $q(\hat x)$}

Let
\[ p(\hat x_1, \ldots,\hat  x_n) = \sum_{S: |S|\leq t} a_S \hat x_S .\]
We associate $p^2(\hat x_1, \ldots, \hat x_n)$ with the matrix $(P_{S_1, S_2})$ 
with rows and columns indexed by $S\subseteq [n], |S|\leq t$ defined by $P_{S_1, S_2}=a_{S_1} a_{S_2}$.
Let $\overrightarrow{x}$ be a column vector consisting of all $\hat x_S$ for $S:|S|\leq t$. 
Then, $p^2(\hat x_1, \ldots, \hat x_n) = \overrightarrow{x}^T P \overrightarrow{x}$.
This means that $P$ is positive semidefinite.

For a permutation $\pi\in \mathfrak{S}_n$, let $P^{\pi}$ be the matrix defined by 
\[ P^{\pi}_{S_1, S_2} = a_{\pi(S_1)} a_{\pi(S_2)} \]
and let $Q=\frac{1}{n!} \sum_{\pi\in \mathfrak{S}_n} P^{\pi}$ be the average of all $P^{\pi}$. Then, 
$q(\hat x)=\overrightarrow{x}^T Q \overrightarrow{x}$.
$Q$ is also positive semidefinite (as a linear combination of positive semidefinite matrices $P^{\pi}$ with positive coefficients).

We decompose $Q=\sum_i \lambda_i Q_i$ with $\lambda_i$ ranging over different non-zero eigenvalues and $Q_i$ being the projectors on the respective eigenspaces. Since $Q$ is positive semidefinite, we have $\lambda_i>0$ for all $i$.

We interpret transformations $U_{\pi}$ as permutation matrices defined by $(U_\pi)_{S, S'}=1$ if $S=\pi(S')$ and 
$(U_\pi)_{S, S'}=0$ otherwise.
Then, we have 
\[ U_{\pi} Q U^{\dagger}_{\pi} = \frac{1}{n!} \sum_{\tau\in \mathfrak{S}_n} U_{\pi} P^{\tau} U^{\dagger}_{\pi} =
\frac{1}{n!} \sum_{\tau\in \mathfrak{S}_n} P^{\pi \tau} = \frac{1}{n!} \sum_{\tau\in \mathfrak{S}_n} P^{\tau} = Q.\]
Since we also have
\[ U_{\pi} Q U^{\dagger}_{\pi} = \sum_i \lambda_i U_{\pi} Q_i U^{\dagger}_{\pi}, \]
we must have $Q_i= U_{\pi} Q_i U^{\dagger}_{\pi}$. This means that $Q_i$ is a projector to 
a subspace $H_i\subseteq H_\wp$ that is invariant under the action of $\mathfrak{S}_n$.
If $H_i$ is not irreducible, we can decompose it into a direct sum of irreducible subspaces
\[ H_i = H_{i, 1} \oplus H_{i, 2} \oplus \ldots \oplus H_{i, m_i} .\]
Then, we have $Q_i = \sum_{j=1}^{m_i} Q_{i, j}$ where $Q_{i, j}$ is a projector to $H_{i, j}$
and $Q=\sum_{i, j} \lambda_i Q_{i, j}$.
This means that we can decompose $q(\hat x)=\sum_{i, j}\lambda_i q_{i, j}(\hat x)$
where $q_{i, j}(\hat x)=\overrightarrow{x}^T Q_{i, j} \overrightarrow{x}$ 
and it suffices to show the theorem for one polynomial $q_{i, j}(\hat x)$ instead of
the whole sum $q(\hat x)$.

\subsubsection{Projector to one subspace.}
Let $H_{\wp,\ell}\subseteq H_{\wp}$ be an irreducible invariant subspace.
We claim that the projection to the subspace $H_{\wp,\ell}$ denoted by $\Pi_{\wp,\ell}$ is of the following form:
\begin{lemma}
  \[ \Pi_{\wp, \ell} = c\rho_{\wp, \ell} \text{ where } 
  \rho_{\wp, \ell} =\sum_{i_1, \ldots, j_b} \overrightarrow{p}_{i_1, \ldots, j_b} \overrightarrow{p}^T_{i_1, \ldots, j_b} 
  \]
for some constant $c$.
\end{lemma}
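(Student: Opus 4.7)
The plan is to apply Schur's lemma to $\rho_{\wp,\ell}$, which is manifestly self-adjoint as a sum of rank-one positive semidefinite terms $\overrightarrow{p}_{i_1,\ldots,j_b}\,\overrightarrow{p}^T_{i_1,\ldots,j_b}$. I would first pin down its support. Every summand has image inside the span of $\overrightarrow{p}_{i_1,\ldots,j_b}$, which by Lemma~\ref{lem:irred} lies in $H_{\wp,\ell}$, so the column space of $\rho_{\wp,\ell}$ is contained in $H_{\wp,\ell}$; conversely, any vector orthogonal to $H_{\wp,\ell}$ is orthogonal to each $\overrightarrow{p}_{i_1,\ldots,j_b}$ and is annihilated by $\rho_{\wp,\ell}$. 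Thus $\rho_{\wp,\ell}$ preserves $H_{\wp,\ell}$ and vanishes on its orthogonal complement.

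Next I would verify that $\rho_{\wp,\ell}$ commutes with the action of $\mathfrak{S}_n$. Since $U_{\pi}$ acts on multilinear polynomials by the substitution $\hat{x}_i \mapsto \hat{x}_{\pi(i)}$, and since $S_m(\hat{x}')$ is symmetric in its remaining arguments, a direct unpacking of the formula in Lemma~\ref{lem:irred} gives $U_{\pi}\,\overrightarrow{p}_{i_1,\ldots,j_b} = \overrightarrow{p}_{\pi(i_1),\ldots,\pi(j_b)}$. Conjugating $\rho_{\wp,\ell}$ by $U_{\pi}$ then merely reindexes the sum, yielding
\[ U_{\pi}\,\rho_{\wp,\ell}\,U_{\pi}^{\dagger} = \sum_{i_1,\ldots,j_b} \overrightarrow{p}_{\pi(i_1),\ldots,\pi(j_b)}\,\overrightarrow{p}^T_{\pi(i_1),\ldots,\pi(j_b)} = \rho_{\wp,\ell}. \]

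To finish, I would appeal to Schur's lemma. The restriction $\rho_{\wp,\ell}|_{H_{\wp,\ell}}$ is self-adjoint and commutes with the irreducible action of $\mathfrak{S}_n$ on $H_{\wp,\ell}$; its eigenspaces are therefore invariant subspaces of $H_{\wp,\ell}$, and irreducibility forces each to be either $\{0\}$ or all of $H_{\wp,\ell}$. Hence $\rho_{\wp,\ell}|_{H_{\wp,\ell}} = \lambda\,I|_{H_{\wp,\ell}}$ for a single scalar $\lambda$, and combining with the vanishing on the orthogonal complement gives $\rho_{\wp,\ell} = \lambda\,\Pi_{\wp,\ell}$, i.e., $\Pi_{\wp,\ell} = c\,\rho_{\wp,\ell}$ with $c = 1/\lambda$. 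The scalar is well defined because $\tr \rho_{\wp,\ell} = \sum_{i_1,\ldots,j_b} \|\overrightarrow{p}_{i_1,\ldots,j_b}\|^2 > 0$ whenever $H_{\wp,\ell}$ is nontrivial. The only place that calls for care is the invariance calculation for $U_{\pi}$ — keeping track of how it acts on polynomials versus how it permutes index tuples — but after that Schur's lemma closes the argument with essentially no further work.
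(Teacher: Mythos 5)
Your proposal is correct and follows essentially the same route as the paper: observe that $U_\pi$ permutes the vectors $\overrightarrow{p}_{i_1,\ldots,j_b}$ so that $\rho_{\wp,\ell}$ commutes with the group action, then use irreducibility of $H_{\wp,\ell}$ (Schur's lemma, or equivalently the eigenspace argument the paper spells out) to conclude that the restriction to $H_{\wp,\ell}$ is a scalar multiple of the identity. Your added care about the support of $\rho_{\wp,\ell}$ and the nonvanishing of the trace only makes explicit what the paper leaves implicit.
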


\proof
If we restrict to the subspace $H_{\wp, \ell}$, then $\Pi_{\wp, \ell}$ is just the identity $I$.

On the right hand side, $\rho_{\wp, \ell}$ is mapped to itself by any $U_{\pi}$ (since any $U_{\pi}$ permutes the vectors
$\overrightarrow{p}_{i_1, \ldots, j_b}$ in some way). 
Therefore, all $U_{\pi}$ also map the eigenspaces of $\rho_{\wp, \ell}$ to themselves.
This means that, if $\rho_{\wp, \ell}$ has an eigenspace $V\subset H_{\wp, \ell}$, 
then $U_{\pi}$ acting on $V$ also form a representation of $\mathfrak{S}_n$ but that would contradict
$H_{\wp, \ell}$ being an irreducible representation.
Therefore, the only eigenspace of $\rho_{\wp, \ell}$ is the entire $H_{\wp, \ell}$.
This can only happen if $\rho_{\wp, \ell}$ is $c I$ for some constant $c$.
\qed

\subsubsection{Final polynomial}

From the previous subsection, it follows that $q_{i, j}(\hat x)$ is a positive constant times
\[ \sum_{i_1, \ldots, j_b}  (\hat{x}_{i_1}-\hat{x}_{j_1})^2 \ldots (\hat{x}_{i_b}-\hat{x}_{j_b})^2 S^2(\hat{x}') \]
 where $S(\hat{x}')$ is a symmetric polynomial of degree at most $t-b$.
Instead of the sum, we consider the expected value of 
$(\hat{x}_{i_1}-\hat{x}_{j_1})^2 \ldots (\hat{x}_{i_b}-\hat{x}_{j_b})^2 S^2(\hat{x}')$ 
when $i_1, \ldots, j_b$ are chosen randomly. (Since the sum and the expected value differ by a constant factor,
this is sufficient.)

Terms $(\hat x_{i_k} - \hat x_{j_k})^2$ are nonzero if and only if one of $x_{i_k}$ and $x_{j_k}$ is $1$ and the other is $-1$.
Then, for $k=1$, we have
\[ Pr\left[\{\hat x_{i_1}, \hat x_{j_1}\} =\{-1, 1\} \right] = \frac{2s(n-s)}{n(n-1)} ,\]
since there are $\frac{n(n-1)}{2}$ possible sets $\{\hat x_{i_1}, \hat x_{j_1}\}$ and $s(n-s)$ of 
them contain one $1$ and one $-1$.
For $k>1$,
\[ Pr\left[\{\hat x_{i_k}, \hat x_{j_k}\} =\{-1, 1\} |
\{\hat x_{i_l}, \hat x_{j_l}\} =\{-1, 1\} \mbox{ for } l\in[k-1] \right] \]
\[ = \frac{2(s-k+1)(n-s-k+1)}{(n-2k+2)(n-2k+1)} ,\]
since the condition $\{\hat x_{i_l}, \hat x_{j_l}\} =\{-1, 1\}$ for $l\in [k-1]$ means that, among the remaining variables, there are
$s-k+1$ variables $\hat x_j=-1$ and $n-s-k+1$  variables $\hat x_j=1$ and $n-2k+2$ variables in total (and, given that,
the $k=1$ argument applies). Thus,
\[ Pr [ (\hat{x}_{i_1}-\hat{x}_{j_1})^2 \ldots (\hat{x}_{i_b}-\hat{x}_{j_b})^2 =1 ] = \frac{2^b s(s-1) \ldots (s-b+1) (n-s) \ldots (n-s-b+1)}{n(n-1) \ldots (n-2b+1)} .\]
Since $S$ is a symmetric polynomial, we have $S(\hat{x}')=S'(s')$ where $S'$ is a polynomial of one variable $s'$, with $s'$ equal to
the number of variables $\hat{x}'_j=-1$. Since there are $b$ variables $\hat{x}_j=-1$ that do not appear in $\hat{x}'$,
we have $s'=s-b$. This means that $S'$ can be rewritten as a polynomial in $s$ (instead of $s'$).

%%%%%%%%%%%%%%%%%%%%%%%%%%%%%%%%%%%%%%%
%%%%%%%%%%%%%%%%%%%%%%%%%%%%%%%%%%%%%%%

\section{The algorithms}
\label{sec:thealg}

In Section~\ref{sec:unb3} we now provide the algorithm for $d\leq 3$
(the algorithm part of Theorems \ref{thm:d1} and \ref{thm:d23}) which we know to be optimal for $d=1$ with $k+l=n$, and for $d=2,3$ and any $k,l$.
Next, in Section~\ref{sec:generalalg} we present the sub-optimal algorithm that works for all $d$, resulting in a general upper bound  on $Q_E(\f{EXACT}_{k,l}^{n})$ (the algorithm part of Theorem \ref{thm:lowd4}). Throughout Section~\ref{sec:thealg} we will refer to $\hat{x}_1+\ldots+\hat{x}_n$ as the unbalance of the input or simply unbalance, in other words, the unbalance increases as the difference between ones and zeroes in the input increases. When $k+l=n$, the condition $\f{EXACT}_{k,n-k}^n(x)=1$ is equivalent to requirement that the unbalance is $\pm d$, i.e., $|\hat{x}_1+\ldots+\hat{x}_n|=n-2k=d$. Hence we will refer to $\f{EXACT}_{k,n-k}^n$ as testing for unbalance $d=n-2k$.

\subsection{The algorithm for unbalance $d\leq 3$}
\label{sec:unb3}
For the upper bound, we now provide a quantum algorithm for the $l=n-k$ case which can be extended to $l\neq n-k$ case. Let us introduce the function $\f{UNBALANCE}_d^n=\f{EXACT}_{\frac{n-d}{2},\frac{n+d}{2}}^n$. When $l=n-k$ then $d=n-2k$ and so $n$ and $d$ have the same parity.

\begin{theorem}
  \label{thm:alg}
  \[Q_E(\f{UNBALANCE}_d^n)\leq \begin{cases}\frac{n+d}{2}\text{, if }d=1,\\
      \frac{n+d}{2}-1\text{, if }d\in\{2,3\}.
  \end{cases}
  \]
\end{theorem}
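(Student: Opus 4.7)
The plan is to construct an explicit exact quantum algorithm matching each claimed upper bound by combining the XOR trick—one quantum query computes $\text{XOR}(x_i, x_j)$—with a recursive reduction that exploits the balanced symmetry $l = n - k$ of the UNBALANCE function. I would argue by induction on $n$ with $d \in \{1, 2, 3\}$ fixed. The base case $n = d$ corresponds to $\f{UNBALANCE}_d^d = \f{EQUALITY}_d$, which is computed in $d - 1$ XOR queries by testing $\text{XOR}(x_1, x_i)$ for $i = 2, \ldots, d$ and accepting iff all outputs are zero.

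For the inductive step, the algorithm first queries $\text{XOR}(x_1, x_2)$. If the outcome is $1$, the bits contribute exactly $1$ to $|x|$, and the residual problem is the balanced $\f{UNBALANCE}_d^{n-2}(x_3, \ldots, x_n)$ (balance is preserved because both $k$ and $l$ decrease by $1$ while $n$ decreases by $2$), handled by induction. If the outcome is $0$, substituting $z_i = x_i \oplus x_1$ for $i \geq 3$ reduces the problem to $\f{EXACT}_{k, l}^{n-2}(z)$: the balanced symmetry $l = n - k$ ensures that both possibilities $x_1 \in \{0, 1\}$ yield the same target condition $|z| \in \{k, l\}$, independent of the unknown value of $x_1$. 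Crucially, since $\text{XOR}(z_i, z_j) = \text{XOR}(x_i, x_j)$, any pure-XOR-query algorithm on the $z$-bits can be simulated on the original $x$-bits at the same query cost.

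The heart of the construction is a pure-XOR-query sub-algorithm for the XOR-$0$ residual that fits within the remaining budget. I would iterate the same reduction, with each subsequent XOR query either returning the problem to a balanced form or triggering another nested $z$-substitution. The recursion terminates after $(n - d)/2$ size-reducing queries on a small $\f{EQUALITY}_d$ instance on the $d$ surviving bits, costing $d - 1$ more queries and yielding the telescoping total $(n + d)/2 - 1$ for $d \in \{2, 3\}$. For $d = 1$, the $\f{EQUALITY}_1$ terminal case is vacuous, so the recursion must instead bottom out on a dedicated $\f{UNBALANCE}_1^3$ subroutine (which uses $2$ queries: one XOR of $x_1, x_2$ and, if equal, one XOR of $x_1, x_3$), giving the slightly worse bound $(n + 1)/2$.

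The main obstacle is controlling the XOR-$0$ residual sub-problem, which is strictly unbalanced on the smaller input ($l > n' - k$ after one reduction) and would naively require one more query than the remaining budget permits (its stand-alone complexity is $\max(n' - k, l) - 1 = l - 1$, against a residual budget of $l - 2$). Overcoming this shortfall is precisely what forces the algorithm to use nested XOR-tricks rather than black-box recursion: the same XOR query can simultaneously count as both a reduction step in the outer algorithm and a constituent query of the inner sub-algorithm once substitutions are unfolded. I expect the detailed proof to present the algorithm as an adaptive decision tree over XOR outcomes, with correctness verified by an explicit case analysis that tracks how pair-XOR information determines the final answer for each possible pattern of ``same'' and ``different'' pairs.
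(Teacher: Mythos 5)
Your strategy reduces, after all the substitutions are unfolded, to an adaptive decision tree whose queries are XORs of pairs of input bits, and such a tree provably cannot achieve the stated bounds. Concretely, take $d=2$, $n=6$, where the theorem claims $Q_E(\f{UNBALANCE}_2^6)=Q_E(\f{EXACT}_{2,4}^6)=3$. In any XOR decision tree of depth $3$, the leaf reached by the all-zeros input is the set of inputs satisfying three homogeneous constraints $x_{i_m}\oplus x_{j_m}=0$; this set consists of all strings that are constant on each connected component of the graph with edges $\{i_m,j_m\}$, and with at most $3$ edges on $6$ vertices there are at least $3$ components. Every partition of $6$ into at least $3$ parts admits a subset of parts summing to $2$ (check $(4,1,1)$, $(3,2,1)$, $(2,2,2)$, and all finer partitions), so this leaf always contains an input of Hamming weight $2$ alongside the all-zeros input, and $\f{EXACT}_{2,4}^6$ is not constant on it. Hence no $3$-query XOR decision tree computes this instance, and the ``double-counting'' of queries you invoke to patch the budget shortfall in the XOR-$0$ branch cannot rescue the construction, since the resulting algorithm is still an XOR decision tree. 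This is consistent with the cited observation of Montanaro et al.\ that $\f{EXACT}_2^4$ cannot be computed optimally by the XOR trick, and with the remark in the Conclusion that no optimal algorithm for these functions appears to measure the state completely after each query. (Your budget analysis already exposes the problem: after an XOR-$0$ outcome the residual problem is an unbalanced $\f{EXACT}$ instance whose known complexity exceeds the remaining budget, and no concrete mechanism is given to close that gap.)

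The paper's proof is genuinely quantum and entirely different: the main routine $\sr{UNB}_d^n$ and a recursive subroutine $\sr{UNB-R}_d^n$ carry an unmeasured coherent state of the form $\sum_i\hat{x}_i\ket{\S}+\sqrt{\gamma}\sum_{i<j}(\hat{x}_i-\hat{x}_j)\ket{i,j}$ across queries; each recursive step (Lemma~\ref{lem:rs}) spends one query and, upon the final measurement, either identifies a balanced pair (reducing $n$ by $2$ while passing a usable residual state downward) or lands on $\ket{\S}$ with an amplitude engineered to vanish exactly when the unbalance is $\pm d$. The technical heart is the recursion (\ref{eq:gamma}) for the parameter $\gamma$, the verification that $\gamma<1$ along the whole chain, and the special two-query subroutine $\sr{UNB-R}_3^5$ needed to start the chain for $d=3$. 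None of this is captured by, or replaceable with, pair-XOR queries followed by complete measurement.
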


We can compute $\f{EXACT}_{k,l}^n$ for $l\neq n-k$ by reducing it to $\f{UNBALANCE}_{d'}^{n'}$:
\begin{lemma}
    \label{lem:red}
    \[Q_E(\f{EXACT}_{k,l}^n) \leq Q_E\left(\f{UNBALANCE}_{l-k}^{n+\max{\{n-l-k,l+k-n\}}}\right)\]
\end{lemma}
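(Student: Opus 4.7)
The plan is to give a direct padding reduction: extend the $n$-bit input $x$ to an $n'$-bit string $y$ by appending a fixed block of known bits (all zeros or all ones), in such a way that $\f{UNBALANCE}_{l-k}^{n'}(y) = \f{EXACT}_{k,l}^n(x)$. Since the appended bits are constants, any oracle call to one of them can be answered without querying $x$, so running the optimal $\f{UNBALANCE}_d^{n'}$ algorithm on $y$ uses at most $Q_E(\f{UNBALANCE}_{l-k}^{n'})$ true queries to $x$.

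I would split into two cases according to the sign of $n-l-k$, which determines whether we pad with zeros or ones. In the case $k+l\geq n$, set $m=k+l-n\geq 0$ and let $y=(x_1,\ldots,x_n,0,\ldots,0)$ have $m$ trailing zeros, so $n'=n+m=k+l$ and $|y|=|x|$. With $d=l-k$, one checks $(n'-d)/2=k$ and $(n'+d)/2=l$, hence $\f{UNBALANCE}_d^{n'}(y)=1$ iff $|y|\in\{k,l\}$, matching $\f{EXACT}_{k,l}^n(x)$. In the case $k+l<n$, set $m=n-k-l>0$ and let $y=(x_1,\ldots,x_n,1,\ldots,1)$ have $m$ trailing ones, so $n'=n+m=2n-k-l$ and $|y|=|x|+m$. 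Now $(n'-d)/2=n-l$ and $(n'+d)/2=n-k$, so $\f{UNBALANCE}_d^{n'}(y)=1$ iff $|y|\in\{n-l,n-k\}$ iff $|x|\in\{k,l\}$, again matching $\f{EXACT}_{k,l}^n(x)$. In both cases $n'=n+\max\{n-l-k,\,l+k-n\}$ as claimed.

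I also need to verify that $\f{UNBALANCE}_d^{n'}$ is well defined, i.e. that $n'$ and $d$ have the same parity. This is immediate: in the first case $n'+d=2l$, and in the second $n'+d=2n-2k$, both even. Finally, since the padded positions hold fixed constants, the oracle for $y$ factors as $O_{y}=O_x\otimes I_{\text{pad}}$, and the identity on the padded block can be absorbed into the data-independent unitaries of the algorithm, so the query count to $x$ is unchanged.

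There is essentially no obstacle: the lemma is pure padding bookkeeping, and the only things to check are the two arithmetic identities for $(n'\pm d)/2$ and the parity condition above.
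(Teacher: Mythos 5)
Your proof is correct and follows exactly the same padding reduction as the paper (pad with $k+l-n$ zeros when $k+l\geq n$, with $n-k-l$ ones when $k+l<n$); the paper simply states the padding in one line, while you additionally verify the arithmetic identities, the parity condition, and the fact that constant padded bits cost no queries.
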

\begin{proof}
For the $l<n-k$ case ($l>n-k$, respectively) simply pad the input bits with $n-l-k$ ones ($l+k-n$ zeroes, resp.) and run $\f{UNBALANCE}_d^{n+|n-l-k|}$ on the padded input. The complexity of the algorithm on the padded problem will be
\[Q_E(\f{EXACT}_{k,l}^n)\leq Q_E\left(\f{UNBALANCE}_d^{n+|n-l-k|}\right).\]
\end{proof}

From Lemma~\ref{lem:red} and Theorem~\ref{thm:alg}, the upper bounds of Theorem~\ref{thm:d1} and Theorem~\ref{thm:d23} follow: 
\[Q_E(\f{EXACT}_{k,l}^n)\leq\begin{cases}
\max{\{n-k,l\}}\text{, if }d=1,\\
\max{\{n-k,l\}}-1\text{, if }d\in \{2,3\}.
\end{cases}\]

%%%%%%%%%%%%%%%%%%%%%%%%%%%%%%%%%%%%%%%%%%%%%%%%

\subsubsection{The structure of the algorithm}

  The algorithm of Theorem~\ref{thm:alg} will use two kinds of subroutines to calculate the function:
  \begin{itemize}
  \item The main routine $\sr{UNB}_d^n$ will start in a quantum state independent of the input and compute a $\f{UNBALANCE}_d^n$ instance;
  \item The subroutine $\sr{UNB-R}_d^n$ will require a precomputed state in the form
  \begin{equation}
  \label{eq:ss}
  \sum_{i\in[n]}{\hat{x}_i}\ket{\S}+\sqrt{\gamma}\sum_{\substack{i,j\in[n]\\i<j}}{(\hat{x}_i-\hat{x}_j)\ket{i,j}}.
  \end{equation}
  $\ket{\S}$ here alludes to fact that the amplitude of the basis state is a sum of $\hat{x}_i$'s.
  \end{itemize}
  
  Let us denote by $\gamma(\sr{UNB-R}_d^n)$ the constant coefficient $\gamma$ of the algorithm $\sr{UNB-R}_d^n$.

    Let us denote by $T(\sr{S})$ the number of queries performed by a subroutine $\sr{S}$.

\begin{lemma}[Recursive step for $\sr{UNB-R}_d^n$]
  \label{lem:rs}
  If $d < n$, $n \geq 3$, and there exists a quantum algorithm $\sr{UNB-R}_d^{n-2}$ computing the function $\f{UNBALANCE}_d^{n-2}$ starting in an unnormalized quantum state of the form (\ref{eq:ss}) on $n-2$ inputs with $\gamma(\sr{UNB-R}_d^{n-2})<1$ then there exists an algorithm $\sr{UNB-R}_d^n$ using $\sr{UNB-R}_d^{n-2}$ as a subroutine, and computing $\f{UNBALANCE}_d^{n}$, starting in the state (\ref{eq:ss})
  where
  \begin{equation}
  \label{eq:gamma}
  \gamma(\sr{UNB-R}_d^n)
  =
  \frac{1}{(n^2-d^2)^2}
  \left(
     n^2(n-2)^2\frac{\gamma(\sr{UNB-R}_d^{n-2})}{1-\gamma(\sr{UNB-R}_d^{n-2})}+d^4
  \right)
  \end{equation}
  and using one more query, i.e.,
\[T(\sr{UNB-R}_d^n)=T(\sr{UNB-R}_d^{n-2})+1.\]
\end{lemma}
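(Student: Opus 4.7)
The plan hinges on a structural property of (\ref{eq:ss}): since $\hat x_i - \hat x_j=0$ when $\hat x_i=\hat x_j$, the pair register is supported only on \emph{balanced} pairs, and removing a balanced pair from the input leaves $U=\sum_i\hat x_i$ unchanged, so $\f{UNBALANCE}_d^n(x)=\f{UNBALANCE}_d^{n-2}(x|_{[n]\setminus\{i,j\}})$. This makes it legitimate to recurse coherently over the pair register on a reduced-input subproblem.

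I would construct $\sr{UNB-R}_d^n$ as a sandwich $W_2\cdot O_x\cdot W_1$ of workspace unitaries around the one new query, followed by a recursive invocation of $\sr{UNB-R}_d^{n-2}$. The sandwich is chosen so that (\ref{eq:ss}) is transformed into
\[
    \beta\sum_{i<j}(\hat x_i-\hat x_j)\,\ket{i,j}\otimes\ket{\psi_{[n]\setminus\{i,j\}}}\;+\;\sqrt{1-\gamma_{n-2}}\,\beta\,(U^2-d^2)\,\ket{R},
\]
where $\ket{\psi_S}$ is a copy of the precomputed state for $\sr{UNB-R}_d^{n-2}$ on the $n-2$ variables in $S$ with parameter $\gamma_{n-2}$, $\ket{R}$ is a dedicated flag state orthogonal to the recursive branch, and $\beta=n/\bigl[(n^2-d^2)\sqrt{1-\gamma_{n-2}}\bigr]$. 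Next I would invoke $\sr{UNB-R}_d^{n-2}$ controlled on the $\ket{i,j}$ register; by the observation above all branches return the correct value of $\f{UNBALANCE}_d^n(x)$ on the recursive subspace, while on the residue subspace $\ket{R}$ is occupied only when $U^2\neq d^2$ (a NO-input), so writing $0$ to the output there is also consistent. The pair register is then uncomputed.

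The formula (\ref{eq:gamma}) is forced by norm accounting. Using that there are $(n^2-U^2)/4$ balanced pairs and $\|\ket{\psi_S}\|^2=U^2+\gamma_{n-2}((n-2)^2-U^2)$, the squared norm of the recursive summand equals $\beta^2(n^2-U^2)\bigl[U^2+\gamma_{n-2}((n-2)^2-U^2)\bigr]$. Adding the residue's $\beta^2(1-\gamma_{n-2})(U^2-d^2)^2$, the $U^4$ terms cancel; equating the total to $U^2+\gamma_n(n^2-U^2)$ gives two linear relations (one in the $U^2$ coefficient, one in the constant) whose unique joint solution is $\beta^2=n^2/\bigl[(n^2-d^2)^2(1-\gamma_{n-2})\bigr]$ together with (\ref{eq:gamma}); the $d^4$ summand inherits from the YES factor $(U^2-d^2)^2$ and the $\gamma_{n-2}/(1-\gamma_{n-2})$ factor from the residue's prefactor.

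The main obstacle will be the explicit construction of $W_1$ and $W_2$: with only one new query between them, they must simultaneously supply the amplitude $U_S=U$ inside each $\ket{\psi_S}$, generate the new pair amplitudes $\hat x_{i'}-\hat x_{j'}$ within $S$, and produce the $(U^2-d^2)$-valued residue. I would work in the $\mathfrak{S}_n$-symmetric subspace of the workspace (cf.\ Lemma \ref{lem:irred}), where admissible unitaries reduce to a small family of rotation angles that can be pinned down by matching $\hat x$-monomial coefficients on both sides of the target equality.
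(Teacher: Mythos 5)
Your high-level architecture coincides with the paper's: one new query sandwiched between input-independent unitaries, a flag state whose amplitude is proportional to $\bigl(\sum_i\hat x_i\bigr)^2-d^2$ (hence vanishing exactly on YES inputs, so answering $0$ there is safe), and recursion on a detected balanced pair, which preserves the unbalance. Your target state is in fact exactly the paper's final state (their $\ket{\S}$ is your $\ket{R}$, and their constraints force $c_6=\sqrt{1-\gamma'}\,c_{11}$, matching your parametrization), and your norm accounting is correct: the $U^4$ terms cancel, and the two resulting equations in the $U^2$-coefficient and the constant term are simultaneously solved by $\beta^2=n^2/\bigl((n^2-d^2)^2(1-\gamma_{n-2})\bigr)$ together with (\ref{eq:gamma}). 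So the recursion formula is indeed forced \emph{once the target state is granted}.

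The gap is that you never establish that the target state is reachable, and that is the entire content of the lemma. Norm preservation input-by-input is only a necessary condition for a one-query implementation: it is the diagonal of the Gram-matrix condition, and says nothing about whether a single pair of input-independent unitaries $W_1,W_2$ with one oracle call in between can realize the required transformation simultaneously for all $x$. The paper's proof is devoted almost entirely to this: it exhibits the pre-query state explicitly (built from (\ref{eq:ss}) by an $R_\alpha$ split of each $\ket{i,j}$ followed by $U_n^{-1}$ and $U_{n-2}^{-1}$), queries the index in the last register, reassembles with $U_n$, $U_{n-2}$, $R_\alpha^{-1}$, and shows that the resulting system of constraints (C1)--(C12) on the constants $c_1,\dots,c_{11}$ admits a real solution precisely under the hypothesis $\gamma'<1$. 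Your closing sentence correctly identifies this as ``the main obstacle,'' but the plan of ``matching monomial coefficients in the symmetric subspace'' is not carried out: you do not exhibit the intermediate (pre-query) state, verify that the map from (\ref{eq:ss}) to it is an isometry on the relevant subspace (degree bookkeeping alone does not give this), or check that the sign/solvability conditions work out. As it stands you have proved uniqueness of $\gamma(\sr{UNB-R}_d^n)$ conditional on existence of the algorithm, not the existence claim itself.
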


The main routine $\sr{UNB}_d^n$ will also be recursive and make use of $\sr{UNB-R}_d^n$. 

\begin{lemma}[Recursive step for $\sr{UNB}_d^n$]
\label{lem:mainrs}
If there exists $\sr{UNB}_d^{n-2}$ and $\sr{UNB-R}_d^n$ with $\gamma(\sr{UNB-R}_d^n)\leq 1$, then there exists $\sr{UNB}_d^n$ such that
\[T(\sr{UNB}_d^n)=1+\max\{T(\sr{UNB}_d^{n-2}),T(\sr{UNB-R}_d^n)\}.\]
\end{lemma}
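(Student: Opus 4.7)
The plan is to design $\sr{UNB}_d^n$ so that its first query, followed by a single unitary, sets up a coherent superposition of two sub-computations that can be finished in parallel by the available subroutines. On branch $A$ we place the unnormalized state (\ref{eq:ss}) on all $n$ input bits and hand it to $\sr{UNB-R}_d^n$. On branch $B$ we prepare, coherently over pairs $i<j$, the combination $(\hat{x}_i-\hat{x}_j)\ket{i,j}$ tensored with the standard initial state of $\sr{UNB}_d^{n-2}$ on the remaining $n-2$ bits, and then invoke $\sr{UNB}_d^{n-2}$ on that subset. The amplitude on a pair vanishes unless $x_i\neq x_j$, and in that case the unbalance on $[n]\setminus\{i,j\}$ equals the unbalance on $[n]$, so on every subbranch of nonzero amplitude both branches compute the same bit $f(x)=\f{UNBALANCE}_d^n(x)$.

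Concretely, I would start in $\tfrac{1}{\sqrt n}\sum_{i\in[n]}\ket{i}\ket{0}$, apply one query to get $\tfrac{1}{\sqrt n}\sum_i\hat{x}_i\ket{i}\ket{0}$, and then apply a unitary $V$ defined on this subspace by $V(\ket{i}\ket{0})=\ket{\psi_i}$, with $\ket{\psi_i}$ designed so that
\[
\sum_{i}\hat{x}_i\ket{\psi_i}=\alpha\,\bigl[\text{state (\ref{eq:ss})}\bigr]\otimes\ket{A}+\beta\sum_{i<j}(\hat{x}_i-\hat{x}_j)\,\ket{i,j}\otimes\ket{\Psi_0^{(n-2)}_{[n]\setminus\{i,j\}}}\otimes\ket{B}.
\]
Each $\ket{\psi_i}$ then splits into three natural groups of terms: a single contribution $\alpha\ket{\S,A}$, the pair-terms on branch $A$ of weight $\alpha\sqrt\gamma$ with a sign depending on whether $i$ is the smaller or larger index in the pair, and the analogous pair-terms on branch $B$ of weight $\beta$.

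The main technical step is to verify that $\{\ket{\psi_i}\}_{i\in[n]}$ is an orthonormal family, so that $V$ is a legal unitary. A direct computation, exploiting the antisymmetry of the pair registers (the pair $(i,i')$ enters with opposite signs in $\ket{\psi_i}$ and $\ket{\psi_{i'}}$) and the fact that $\ket{\Psi_0^{(n-2)}_{[n]\setminus\{i,i'\}}}$ depends only on the unordered pair, gives
\[
\braket{\psi_i}{\psi_i}=\alpha^2+(n-1)(\alpha^2\gamma+\beta^2),\qquad\braket{\psi_{i'}}{\psi_i}=\alpha^2(1-\gamma)-\beta^2\text{ for }i\neq i'.
\]
Orthonormality then forces $\alpha=1/\sqrt n$ and $\beta=\alpha\sqrt{1-\gamma}$, and the hypothesis $\gamma\leq 1$ is exactly what is needed for $\beta$ to be a real number. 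This off-diagonal calculation is the only genuinely non-routine step in the proof.

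Once $V$ has been applied, controlled on the $A/B$ flag I would run $\sr{UNB-R}_d^n$ on branch $A$ and $\sr{UNB}_d^{n-2}$ on the register-indexed subset $[n]\setminus\{i,j\}$ in each subbranch of $B$ (which is well defined because $\sr{UNB}_d^{n-2}$ is invariant under permutations of its inputs), padding the shorter subroutine with dummy queries so that both branches terminate simultaneously. Since both subroutines leave the output register in $\ket{f(x)}$ on every subbranch of nonzero amplitude, the final measurement returns $f(x)$ with certainty, and the total query cost is $1+\max\{T(\sr{UNB-R}_d^n),T(\sr{UNB}_d^{n-2})\}$, as claimed.
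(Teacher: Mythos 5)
Your construction is correct and is essentially the paper's proof: your unitary $V$ is exactly the composition of the paper's $U_n$ with the rotation $R_\gamma$ (your orthonormality computation re-derives $\alpha=1/\sqrt{n}$, $\beta=\alpha\sqrt{1-\gamma}$, which is where the hypothesis $\gamma\le 1$ enters in the paper as well), and your $A$/$B$ flag with deferred measurement and padded parallel subroutines is just a coherent rephrasing of the paper's ``measure the $\R$-labelled subspace, then branch to $\sr{UNB}_d^{n-2}$ or $\sr{UNB-R}_d^n$''. Both versions yield the same query count $1+\max\{T(\sr{UNB}_d^{n-2}),T(\sr{UNB-R}_d^n)\}$.
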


Now we are ready to prove Theorem \ref{thm:alg}:
\begin{proof}[Proof of Theorem \ref{thm:alg}]
  We can draw the subroutine dependency graph like so:
\[\begin{array}{ccccccccc}
  \sr{UNB}_d^d & \leftarrow & \sr{UNB}_d^{d+2} &\leftarrow & \sr{UNB}_d^{d+4} & \leftarrow & \cdots & \leftarrow & \sr{UNB}_d^{d+2k}\\
  & & \downarrow & & \downarrow & & & & \downarrow \\
  \sr{UNB-R}_d^d & \leftarrow & \sr{UNB-R}_d^{d+2} & \leftarrow & \sr{UNB-R}_d^{d+4} & \leftarrow & \cdots & \leftarrow & \sr{UNB-R}_d^{d+2k}
\end{array}\]
  Each subroutine performs one query and calls one of the subroutines in the dependency graph depending on the result of the measurement. Using Lemma \ref{lem:rs} starting with an algorithm $\sr{UNB-R}_d^{d+2k_0}$ we can build chains of algorithms $\sr{UNB-R}_d^{d+2k_0},\sr{UNB-R}_d^{d+2(k_0+1)},\ldots,\sr{UNB-R}_d^{d+2k}$ as long as $\gamma(\sr{UNB-R}_d^{d+2k_i})<1$.
Notice that we may use multiple chains to cover all $k>0$. Fortunately, as we will show for $d\in\{1,2,3\}$, a single infinite chain will suffice.

Then, using Lemma \ref{lem:mainrs} we can build algorithms $\sr{UNB}_d^{d+2k}$ for all $k>0$ if we additionally have an initial base algorithm for $\sr{UNB}_d^d$. The query complexity of $\sr{UNB}_d^{d+2k}$ built in this way on a chain of $\sr{UNB-R}_d^{d+2k}$ starting at $k_0\in\{0,1\}$ will have
\[T(\sr{UNB}_d^{d+2k}) = \max\{k+T(\sr{UNB}_d^d),T(\sr{UNB-R}_d^{d+2k_0})+k-k_0+1\}.\]

Since $\sr{UNB}_d^d$ is computing $\f{EQUALITY}_d$, it uses $d-1$ queries, so we can disregard $k+T(\sr{UNB}_d^d)$, since $k=\frac{n-d}{2}$ and therefore $k+T(\sr{UNB}_d^d)\leq \frac{n+d}{2}-1$. To finish the proof we now need to show that there exists a chain of $\sr{UNB-R}_d^{d+2k}$ starting at $k_0$ with $\gamma(\sr{UNB-R}_d^n)<1$ and
\[T(\sr{UNB-R}_d^{d+2k_0})+k-k_0+1\leq\begin{cases}n-k\text{, if }d=1,\\
n-k-1\text{, if }d\in\{2,3\}.
\end{cases}
\]

\begin{itemize}
    \item When $d=1$, we will have $k_0=0$ and show that $T(\sr{UNB-R}_d^d)\leq n-2k+k_0-1=d+k_0-1=0$.
Since the function $\f{UNBALANCE}_1^1$ does not depend on input variables, there exists $\sr{UNB-R}_1^1$ with $\gamma(\sr{UNB-R}_1^1)=0$ using 0 queries.

    \item When $d=2$ we will again have $k_0=0$ and $T(\sr{UNB-R}_d^d)\leq d+k_0-2=0$.
The subroutine $\sr{UNB-R}_2^2$ is essentially required to compute $XOR(x_1,x_2)$ starting in a non-normalized state $(\hat{x}_1+\hat{x}_2)\ket{\S}+\sqrt{\gamma}\cdot(\hat{x}_1-\hat{x}_2)\ket{1,2}$. If $\gamma=0$ we can only measure $\ket{\S}$ if $XOR=0$ and no queries are necessary.

    \item When $d=3$ a single infinite chain starting at $k_0=0$ does not exist. It does exist starting at $k_0=1$ and $T(\sr{UNB-R}_d^{d+2})\leq d+k_0-2=2$.
We give algorithm for this as a separate lemma:
\begin{lemma}
\label{lem:1or4}
There exists a subroutine $\sr{UNB-R}_3^5$ with $\gamma(\sr{UNB-R}_3^5)=\frac{1}{112}$ using 2 queries.
\end{lemma}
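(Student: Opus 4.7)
The plan is to construct the two-query subroutine $\sr{UNB-R}_3^5$ explicitly, by specifying the three unitaries surrounding the two queries and then verifying correctness directly. Since the starting state
\[
|\psi_0\rangle=\sum_{i=1}^5\hat x_i\ket\S+\tfrac{1}{\sqrt{112}}\sum_{i<j}(\hat x_i-\hat x_j)\ket{i,j}
\]
and the target function $\f{UNBALANCE}_3^5$ are both $\mathfrak{S}_5$-symmetric, I would take the algorithm to be $\mathfrak{S}_5$-equivariant and reduce correctness to six scalar equations, one per equivalence class $|x|\in\{0,1,\ldots,5\}$. Because the input amplitudes are already linear in $\hat x$, two more queries push amplitudes to degree at most three, and the final accept-probability polynomial has degree at most six---enough room to realise the target values $0,1,0,0,1,0$.

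First I would pick a concrete embedding of the eleven basis vectors $\ket\S,\ket{i,j}$ into the algorithm's Hilbert space. The natural choice is to have $U_0$ send $\ket\S$ to the symmetric sum $\tfrac{1}{\sqrt{5}}\sum_i\ket i$ in the query register (a ``sum sector''), and each $\ket{i,j}$ to the difference $\tfrac{1}{\sqrt{2}}(\ket i-\ket j)$ tagged by a distinguishing work-register label (a ``difference sector''). After the first query the sum sector carries amplitude proportional to $\hat x_i\sum_j\hat x_j$ on $\ket i$, while each $(i,j)$-difference sector carries amplitudes $\sqrt{\gamma}(\hat x_i-\hat x_j)\hat x_i$ and $-\sqrt{\gamma}(\hat x_i-\hat x_j)\hat x_j$, so that all the symmetric quadratic invariants in $\hat x$ become available.

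The middle unitary $U_1$ would then be chosen to recombine these two sectors so that, after the second query, a distinguished accept basis state carries an amplitude proportional (after symmetrization) to a degree-three polynomial in $|x|$ vanishing on $\{0,2,3,5\}$ and equal in magnitude on $\{1,4\}$. Using the $\mathfrak{S}_5$-decomposition of the relevant subspace into the $b=0$ and $b=1$ irreducible invariant subspaces of Lemma \ref{lem:irred}, the construction reduces to fixing a small number of mixing angles on each irreducible. The ratio between the sum-sector and the difference-sector contributions that achieves the desired vanishing pattern is unique up to an overall scale, and it is precisely this ratio that forces $\gamma=\tfrac{1}{112}$.

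The main obstacle will be the bookkeeping required to ensure that the chosen blocks extend to a valid global unitary and that all four vanishing equations are satisfied simultaneously by a single value of $\gamma$. Concretely, after the two queries the accept amplitude decomposes into a term in $\bigl(\sum_i\hat x_i\bigr)^2$ coming from the sum sector and a term in $\sum_{i<j}(\hat x_i-\hat x_j)^2$ coming from the difference sector; the cancellation pattern on $|x|\in\{0,2,3,5\}$ yields a linear system whose unique non-trivial solution pins $\gamma$ to $\tfrac{1}{112}$. Once the building blocks are written down, the remaining verification is a finite calculation on eleven-dimensional matrices that in principle can be done by hand or by computer algebra.
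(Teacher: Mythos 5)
Your overall strategy --- an explicitly constructed, $\mathfrak{S}_5$-equivariant two-query circuit built from sum/difference sectors, with correctness reduced to vanishing conditions on symmetric polynomials --- is indeed the shape of the paper's proof (which interleaves $R_\alpha$, $U_5^{\pm1}$, $U_3^{\pm1}$ around the two queries). But the central step of your plan cannot work as stated. You ask a \emph{single, symmetric} accept basis state to carry an amplitude that, as a polynomial in $|x|$, has degree at most $3$ yet vanishes on the four points $\{0,2,3,5\}$; such a polynomial is identically zero. Your more concrete version of the claim fails for the same reason: a combination of $(\sum_i\hat x_i)^2=s^2$ and $\sum_{i<j}(\hat x_i-\hat x_j)^2=25-s^2$ is an even quadratic in $s$, and forcing it to vanish at both $s^2=25$ (i.e.\ $|x|\in\{0,5\}$) and $s^2=1$ (i.e.\ $|x|\in\{2,3\}$) forces both coefficients to zero. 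So the ``linear system whose unique non-trivial solution pins $\gamma=\tfrac1{112}$'' has only the trivial solution, and no value of $\gamma$ is determined. (Appealing to vanishing ``after symmetrization'' does not rescue this: exactness requires the amplitude itself to vanish on every input of the wrong type, not merely its average over $\mathfrak{S}_5$.)

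The paper's construction distributes the vanishing patterns differently, and this is the idea your proposal is missing. The symmetric state $\ket{\S}$ receives a degree-$3$ \emph{odd} symmetric amplitude $\sum_{i<j<k}\hat x_i\hat x_j\hat x_k+\tfrac23\sum_i\hat x_i\propto s(s^2-9)$, which vanishes exactly on $|x|\in\{1,4\}$ (only two integer roots needed), so measuring $\ket{\S}$ yields output $0$. The vanishing on $\{0,2,3,5\}$ is assigned instead to the \emph{non-symmetric} orbit of states $\ket{i,j}$, whose amplitude $c_{16}(\hat x_i-\hat x_j)\bigl(\sum_{k<l}\hat x_k\hat x_l+1\bigr)$ is not a function of $|x|$ alone: the prefactor $(\hat x_i-\hat x_j)$ kills $|x|\in\{0,5\}$ for free, and only the quadratic factor must handle $\{2,3\}$, which is possible. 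A third orbit $\ket{i,j,k,l}$ must also be checked (it vanishes automatically on $\{0,1,4,5\}$), since in an exact algorithm \emph{every} measurement outcome must determine $f$ --- a requirement your proposal does not address. Finally, $\gamma=\tfrac1{112}=c_1^2$ with $c_1=\tfrac{1}{4\sqrt7}$ emerges only from solving the full system of roughly twenty unitarity and matching constraints along the circuit, not from a two-parameter cancellation condition; without that computation the specific value is not established.
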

\end{itemize}

  To show that the chains of algorithms $\sr{UNB-R}_d^{d+2k}$ obtained by repeated applications of Lemma \ref{lem:rs} never have $\gamma(\sr{UNB-R}_d^{d+2k})\geq 1$, we use the recursive identity (\ref{eq:gamma}). It would be sufficient to show that $\exists n_{init} \forall n \geq n_{init}:\gamma(\sr{UNB-R}_d^n)\leq \frac{1}{n}$. For $n < n_{init}$ it can be verified through explicit computation. For this it would be sufficient to show that $\exists n_{init} :\gamma(\sr{UNB-R}_d^{n_{init}})\leq \frac{1}{n_{init}}\wedge \forall n > n_{init}:\gamma(\sr{UNB-R}_d^{n-2})\leq \frac{1}{n-2} \rightarrow \gamma(\sr{UNB-R}_d^n)\leq \frac{1}{n}$. The implication holds whenever
  \[\frac{\frac{n^2(n-2)^2}{n-3}+d^4}{(n^2-d^2)^2}\leq \frac{1}{n},\]
or equivalently,
  \[n^4+(-2d^2-4)n^3+(6d^2-d^4)n^2+4d^4n-3d^4\geq 0.\]
  When $d=1$ the inequality holds for $n\geq 5$. We can then numerically verify that $\gamma(\sr{UNB-R}_1^5) \approx 0.008 \leq \frac{1}{5}$. When $d=2$ the inequality holds onwards from $n\geq 12$. For our base case $\gamma(\sr{UNB-R}_2^{12})\approx  0.039\leq \frac{1}{12}$. When $d=3$ the inequality holds onwards from $n \geq 23$. For our chain $\gamma(\sr{UNB-R}_3^{23})\approx 0.030 \leq \frac{1}{23}$.

\end{proof}

%%%%%%%%%%%%%%%%%%%%%%%%%%%%%%%%%%%%%%%%%%%%%%%%

\subsection{Proof of Lemma \ref{lem:rs}}
\begin{proof}
  Our algorithm will utilize the following two unitaries and their inverses:
  \begin{itemize}
  \item 
    $R_\alpha$ is a unitary transformation over a 3-dimensional Hilbert space with basis vectors $\ket{0},\ket{\L}, $ and $\ket{\R}$. 
    It is a unitary completion of the following transformation:
    \[R_\alpha\ket{0} = \sin{\alpha}\ket{\L} + \cos{\alpha}\ket{\R}.\]
  \item
    $U_n$ is a unitary transformation over a Hilbert space of dimension $n+\binom{n}{2}+1$ with basis vectors $\{\ket{1}, \ket{2}, \ldots, \ket{n}, \ket{\S}, \ket{1,2}, \ket{1,3}, \ldots, \ket{n-1,n}\}$. It is a unitary completion of the following transformation:
\begin{align}
    U_n\ket{i} = \frac{1}{\sqrt{n}}
        \Bigg(
        \ket{\S}
        - \sum_{j=1}^{i-1}
            \ket{j,i}
            + \sum_{j=i+1}^{n} 
            \ket{i,j}
        \Bigg).
    \label{Uoperation}
\end{align}
    Note that on a superposition of input vectors $U_n$ acts as:
\[
    U_n \sum_{i\in[n]}
        \alpha_i\ket{i}
    = \frac{1}{\sqrt{n}}
        \Bigg(
            \sum_{i\in[n]}{\alpha_i\ket{\S}}+\sum_{\substack{i,j\in[n]\\i<j}}{(\alpha_i-\alpha_j)\ket{i,j}}
        \Bigg).
\]
  \end{itemize}
  The subspace $\{\ket{1},\ldots,\ket{n}\}$ can be regarded as the input subspace of $U_n$ and the orthogonal subspace $\{\ket{\S},\ket{1,2},\ket{1,3},\ldots,\ket{n-1,n}\}$ --- as the output subspace. We will call $\ket{\S}$ the sum output state and $\{\ket{1,2},\ket{1,3},\ldots,\ket{n-1,n}\}$ the difference output states. In the description of the algorithm we will specify which basis states are designated as input and output states for each $R_\alpha$ and $U_n$.
  
  We will track the state of the algorithm $\sr{UNB-R}_d^n$  throughout the recursive step. Additionally, we will introduce some real constants and specify the constraints on them induced by the algorithm. Let $\gamma=\gamma(\sr{UNB-R}_d^n)$ and $\gamma'=\gamma(\sr{UNB-R}_d^{n-2})$. The algorithm starts in the state:
  \[\sum_{i\in[n]}{\hat{x}_i}\ket{\S}+\sqrt{\gamma}\sum_{\substack{i,j\in[n]\\i<j}}{(\hat{x}_i-\hat{x}_j)\ket{i,j}}.\]
  We now apply $R_\alpha$ to each of the $\ket{i,j}$ and obtain
  \[\sum_{i\in[n]}{\hat{x}_i}\ket{\S}+c_1\sum_{\substack{i,j\in[n]\\i<j}}{(\hat{x}_i-\hat{x}_j)\ket{i,j,\L}}+c_2\sum_{\substack{i,j\in[n]\\i<j}}{(\hat{x}_i-\hat{x}_j)\ket{i,j,\R}},\]
  with
  \begin{equation*}
    \tag{C1}
    c_1^2+c_2^2=\gamma.
  \end{equation*}
  Let us apply $U_n^{-1}$ to the $\ket{\S}$ and $\ket{i,j,\L}$ parts of the state with $\ket{\S}$ and $\ket{i,j,\L}$ serving as the input states of $U_n^{-1}$. The output states for $U_n$ are $\{\ket{l}|l\in [n]\}$. For each state $\ket{i,j,\R}$ we perform $U_{n-2}^{-1}$ with $\ket{i,j,\R}$ serving as the sum input state of $U_{n-2}^{-1}$ and $\ket{i,j,\R,u,v}$ being some auxiliary input states with 0 amplitudes corresponding to difference input states $\{\ket{u,v}| \{u,v\}\subseteq[n]\backslash\{i,j\}\}$). The output states for $\ket{i,j}$ are $\ket{i,j,l},l\in[n]\backslash \{i,j\}$. We obtain:
\[
    c_3\sum_{l\in[n]}{\bigg(\sum_{i\in[n]}\hat{x}_i-c_4\hat{x}_l\bigg)\ket{l}}
    +c_5\sum_{\substack{i,j\in[n]\\l\in[n]\backslash\{i,j\}\\i<j}}{(\hat{x}_i-\hat{x}_j)\ket{i,j,l}}.
\]
It is easier to verify this statement by working backwards --- pretending that we apply $U_n$ and $U_{n-2}$, respectively, to the state above. Unlike their inverses, we know how to apply $U_n$ and $U_{n-2}$. Combined with the following constraints, the above can be verified.
\begin{align*}
    c_3\cdot n-c_3\cdot c_4=\sqrt{n}, \qquad \qquad
    c_3\cdot c_4=-c_1\sqrt{n}, \qquad \qquad
    c_2=c_5\sqrt{n-2}.
    \tag{C2-C4}
\end{align*}
%  \begin{equation*}
%    \tag{C2}
%    c_3\cdot n-c_3\cdot c_4=\sqrt{n},
%  \end{equation*}
%  \begin{equation*}
%    \tag{C3}
%    c_3\cdot c_4=c_1\sqrt{n},
%  \end{equation*}
%  \begin{equation*}
%    \tag{C4}
%    c_2=c_5\sqrt{n-2}.
%  \end{equation*}
  The constraints can be obtained by considering the coefficient of the terms before and after the transformation. For example, the first constraint (C2) is the coefficient in front of $\sum_{i\in[n]}{\hat{x}_i}$ before the transformation. If we run the algorithm backwards, the coefficient only depends on $c_3$ and $c_4$ from the states $c_3\sum_{l\in[n]}{\big(\sum_{i\in[n]}\hat{x}_i-c_4\hat{x}_l\big)\ket{l}}$.
  
  Next, we query the variable as specified by the last number in the register, getting
\[
    c_3\sum_{l\in[n]}{\hat{x}_l
        \bigg(
            \sum_{i\in[n]}\hat{x}_i-c_4\hat{x}_l
        \bigg)\ket{l}}
    +c_5\sum_{\substack{i,j\in[n]\\l\in[n]\backslash\{i,j\}\\i<j}}{\hat{x}_l(\hat{x}_i-\hat{x}_j)\ket{i,j,l}}.\]
Next, we apply $U_n$ to the $\ket{l}$ states as input states and using $\ket{S}$ and $\ket{i,j,\L}$ as the output states. Next, for each pair $\{i,j\}$, we apply $U_{n-2}$ to the group of states $\{\ket{i,j,l}|l\in[n]\backslash\{i,j\}\}$, thinking of those as input states and $\ket{i,j,\R}$ playing the role of the sum output state $\ket{\S}$ and $\ket{i,j,u,v}$ having the role of difference output states. We obtain
\[
    \begin{split}
    c_6\bigg(\sum_{\substack{i,j\in[n]}}{\hat{x}_i\hat{x}_j}-c_7\bigg)\ket{\S}
    +c_8\sum_{\substack{i,j\in[n]\\l\in[n]\backslash\{i,j\}\\i<j}}{(\hat{x}_i-\hat{x}_j)\hat{x}_l\ket{i,j,\L}}\\
    +c_9\sum_{\substack{i,j\in[n]\\l\in[n]\backslash\{i,j\}\\i<j}}{(\hat{x}_i-\hat{x}_j)\hat{x}_l\ket{i,j,\R}}
    +c_{10}\sum_{\substack{i,j\in[n]\\u,v\in[n]\backslash\{i,j\}\\i<j\\u<v}}{(\hat{x}_i-\hat{x}_j)(\hat{x}_u-\hat{x}_v)\ket{i,j,u,v}},
  \end{split}
  \]
when the following constraints hold: 
\begin{align*}
   &c_3=c_6\sqrt{n}, \qquad \qquad 
        c_5=c_9\sqrt{n-2}, \qquad \qquad 
        c_3\cdot c_4\cdot n=c_6\cdot c_7\sqrt{n}, \\
   &c_3=c_8 \sqrt{n}, \qquad \qquad 
       c_5=c_{10}\sqrt{n-2}. \tag{C5-C9}
\end{align*}
%\vspace*{-\baselineskip}
%\begin{tabular}{p{0.45\textwidth}p{0.45\textwidth}}
%  \begin{equation*}
%    \tag{C5}
%    c_3=c_6\sqrt{n},
%  \end{equation*}
%  \begin{equation*}
%    \tag{C6}
%    c_3\cdot c_4\cdot n=c_6\cdot c_7\sqrt{n},
%  \end{equation*}
%  \begin{equation*}
%    \tag{C7}
%    c_3=c_8 \sqrt{n},
%  \end{equation*} &
%  \begin{equation*}
%    \tag{C8}
%    c_5=c_9\sqrt{n-2},
%  \end{equation*}
%  \begin{equation*}
%    \tag{C9}
%    c_5=c_{10}\sqrt{n-2}.
%  \end{equation*}
%\end{tabular}

  To finish up the unitary transformations of the recursive step we now perform $R_\alpha^{-1}$ on the pairs of states $\ket{i,j,\L}$ and $\ket{i,j,\R}$ states, turning them to $\ket{i,j}$, and giving us
  \[\begin{split}c_6\bigg(\sum_{i,j\in[n]}{\hat{x}_i\hat{x}_j}-c_7\bigg)\ket{\S}+\\
  +c_{11}\sum_{\substack{i,j\in[n]\\i<j}}{(\hat{x}_i-\hat{x}_j)\ket{i,j}\bigg(\sum_{\substack{l\in[n]\\l\notin\{i,j\}}}{\hat{x}_l\ket{\S}}+\sqrt{\gamma'}\sum_{\substack{u,v\in[n]\\u,v\notin\{i,j\}\\u<v}}{(\hat{x}_u-\hat{x}_v)\ket{u,v}}\bigg)}
  \end{split}
  \]
  Again, this is true if the constraints obey 
\[
    c_8^2+c_9^2=c_{11}^2, \qquad \qquad
    c_{10}=c_{11}\cdot \sqrt{\gamma'}. \tag{C10-C11}
\]
%\begin{tabular}{p{0.45\textwidth}p{0.45\textwidth}}
%  \begin{equation*}
%    \tag{C10}
%    c_8^2+c_9^2=c_{11}^2,
%  \end{equation*}
%  &
%  \begin{equation*}
%    \tag{C11}
%    c_{10}=c_{11}\cdot \sqrt{\gamma'}.
%  \end{equation*}
%  \end{tabular}
  Finally, we measure whether the state is in subspace $\{\ket{\S}\}$. We can set the constant $c_7$ so that whenever $\f{UNBALANCE}_d^n(x)=1$ or equivalently $\hat{x}_1+\ldots +\hat{x}_n=\pm d$ the amplitude of $\ket{\S}$ is zero:
  \begin{equation*}
    \tag{C12}
    c_7=d^2.
  \end{equation*}
  If on the other hand the state is not in subspace $\ket{\S}$, we end up measuring $\ket{i,j}$ in the first register. Without loss of generality we may assume that the result is $\{n-1,n\}$. Thus we have learned that $\{\hat{x}_{n-1},\hat{x}_n\}=\{-1,1\}$ is a balanced pair that can be removed from consideration. Furthermore, we ended up in a useful (unnormalized) state
  \[\sum_{i\in[n-2]}{\hat{x}_i}\ket{\S}+\sqrt{\gamma'}\sum_{\substack{i,j\in[n-2]\\i<j}}{(\hat{x}_i-\hat{x}_j)\ket{i,j}}.\]
  Therefore, we can call $\sr{UNB-R}_d^{n-2}$ recursively,  since
  \[\f{EXACT}^n_{k,n-k}(x_1, \ldots, x_n) = \f{EXACT}^{n-2}_{k-1,n-k-1}(x_1, \ldots, x_{n-2}).\]

  When we solve for $\gamma$ in terms of $n$, $d$ and $\gamma'$, there is only one solution up to the signs of some constants $c_i$. The solution is specified in the statement of Lemma~\ref{lem:rs}.
\end{proof}

%%%%%%%%%%%%%%%%%%%%%%%%%%%%%%%%%%%%%%%%
\subsection{Proof of Lemma \ref{lem:mainrs}}
\begin{proof}
  We start in state $\sum_{i\in[n]}{\frac{1}{\sqrt{n}}\ket{i}}$, perform the query to get $\sum_{i\in[n]}{\frac{\hat{x}_i}{\sqrt{n}}\ket{i}}$ and apply $U_n$ from Lemma \ref{lem:rs} obtaining 
  \[\frac{1}{n}
  \Bigg(
      \sum_{i\in[n]}
  \hat{x}_i \ket{\S}
  +\sum_{\substack{i,j\in[n]\\i<j}}
      (\hat{x}_i-\hat{x}_j)\ket{i,j}
  \Bigg).\]
  Let $\gamma = \gamma(\sr{UNB-R}_d^n)$. Next, we apply $R_\gamma$ on the second part of the state, obtaining
  \[
  \begin{split}
  \frac{1}{n}
     \Bigg(
      \sum_{i\in[n]}{\hat{x}_i}\ket{\S}+\sqrt{\gamma}\sum_{\substack{i,j\in[n]\\i<j}}{(\hat{x}_i-\hat{x}_j)\ket{i,j,\L}}
      \Bigg)+\frac{1}{n}\sqrt{1-\gamma}\sum_{\substack{i,j\in[n]\\i<j}}{(\hat{x}_i-\hat{x}_j)\ket{i,j,\R}}
  \end{split}
  .\]
  Finally we measure completely the subspace labeled with $\R$. If we obtain $\ket{i,j,\R}$ we learn that $\hat{x}_i\neq \hat{x}_j$ and thus have reduced our problem to $\f{UNBALANCE}_d^{n-2}$ on the remaining variables which we can compute using $\sr{UNB}_{d,n-2}$. If we obtain the orthogonal subspace, we end up in non-normalized state
  \[
  \sum_{i\in[n]}{\hat{x}_i}\ket{\S}+\sqrt{\gamma}\sum_{\substack{i,j\in[n]\\i<j}}{(\hat{x}_i-\hat{x}_j)\ket{i,j,\L}},
  \]
  which we pass to $\sr{UNB-R}_d^n$.
\end{proof}

%%%%%%%%%%%%%%%%%%%%%%%%%%%%%%%%
%%%%%%%%%%%%%%%%%%%%%%%%%%%%%%%%

\subsection{The general upper bound}
\label{sec:generalalg}

We now present a general upper bound to
$Q(\f{EXACT}_{k,l}^{n})$. The algorithms we present are worse (by at most 2 queries) than the one in Section \ref{sec:thealg} when $l-k = d \leq 3$. However, they work for any $k,l$ and thus also for any $d$. 

First, for the special case $k+l=n$, we claim
\begin{theorem}
  \label{thm:alg2}
  \[Q_E(\f{EXACT}_{k,n-k}^n)\leq n-k+1.
  \]
\end{theorem}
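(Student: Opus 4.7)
My plan is to reuse the $\sr{UNB}/\sr{UNB-R}$ framework of Section~\ref{sec:unb3}, now for arbitrary $d$, at the cost of up to two extra queries compared to the optimal $d\le 3$ result. The recursion~(\ref{eq:gamma}) for $\gamma(\sr{UNB-R}_d^n)$ is contractive once $n>d^2/2$, but the raw single-step bound $d^4/(n^2-d^2)^2$ already exceeds $1$ for $d\ge 5$ unless $n>\sqrt{2}\,d$. So for each $d$ I would shift the start of the chain to some $k_0=k_0(d)$ large enough that (i) $\gamma(\sr{UNB-R}_d^{d+2(k_0+1)})<1$ when the chain is seeded with $\gamma(\sr{UNB-R}_d^{d+2k_0})=0$, and (ii) the inductive argument at the end of the proof of Theorem~\ref{thm:alg} keeps $\gamma\le 1/n$ for all larger $n$, leaving only finitely many small cases to verify numerically.

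The real work is to implement the base routine $\sr{UNB-R}_d^{d+2k_0}$ within a budget of $d+k_0$ queries. Setting $\gamma=0$ there turns the prescribed input state into a single basis direction $\sum_i\hat{x}_i\ket{\S}$; its input-dependent amplitude can be moved to an ancilla by an input-independent unitary at no query cost, freeing the query register to compute $\f{UNBALANCE}_d^{d+2k_0}$ from scratch. For small $k_0$ the classical XOR trick already fits ($n-1\le d+k_0$ when $k_0\le 1$); for larger $k_0$ I would either invoke an explicit small construction analogous to Lemma~\ref{lem:1or4}, or recurse into Theorem~\ref{thm:alg2} itself on the smaller instance and absorb the one-query overhead into the slack of the bound.

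Assembling via Lemma~\ref{lem:mainrs} and the EQUALITY base $T(\sr{UNB}_d^d)=d-1$, the main estimate is
\[
T(\sr{UNB}_d^{d+2k})\le\max\bigl\{\,k+d-1,\ T(\sr{UNB-R}_d^{d+2k_0})+k-k_0+1\,\bigr\}\le d+k+1 = n-k+1.
\]
The main obstacle is the tight base case: delivering $\sr{UNB-R}_d^{d+2k_0}$ within exactly $d+k_0$ queries for every $d$ is precisely the one-query slack that distinguishes the bound $n-k+1$ from the conjectured optimum $n-k-1$, and is where the bulk of the construction goes.
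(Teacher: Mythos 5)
Your reduction to the $\sr{UNB}/\sr{UNB-R}$ recursion has a genuine gap exactly where you locate it: the base routine. For large $d$ the seed of the chain must sit at $k_0=\Theta(d)$ (you need $d+2k_0+2>\sqrt{2}\,d$ just to make the first application of (\ref{eq:gamma}) with $\gamma'=0$ give $\gamma<1$), and at that point neither of your two fallbacks delivers $\sr{UNB-R}_d^{d+2k_0}$ within the required $d+k_0$ queries. The XOR trick costs $n_0-1=d+2k_0-1$ queries, which fits the budget only for $k_0\le 1$; and recursing into Theorem~\ref{thm:alg2} itself costs $n_0-k_0+1=d+k_0+1$ queries, one over budget, which propagates through your final estimate to $T(\sr{UNB}_d^{d+2k})\le d+k+2=n-k+2$ --- there is no slack in the bound to absorb it, since the $\sr{UNB-R}$ branch is already the binding term. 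The budget $d+k_0=n_0-k_0$ is only one above the lower bound of Theorem~\ref{thm:lb}, so constructing this base case for every $d$ is essentially the same difficulty as the theorem you are trying to prove, now on a smaller instance with one fewer query allowed; Lemma~\ref{lem:1or4} is a bespoke $d=3$ construction and gives no recipe for general $d$. Two smaller issues: an input-independent unitary cannot ``move'' the scalar factor $\sum_i\hat{x}_i$ off the state $\bigl(\sum_i\hat{x}_i\bigr)\ket{\S}$ to an ancilla (the factor simply multiplies every subsequent amplitude; the correct observation is that exactness is then only needed on inputs with nonzero unbalance), and verifying $\gamma\le 1/n$ along the chain would have to be done uniformly in $d$, not just for finitely many $n$ per fixed $d$.

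The paper proves Theorem~\ref{thm:alg2} by a much more elementary route that avoids the recursion entirely. One round consists of preparing $\bigl(\frac{d}{n}\ket{0}+\ket{1}\bigr)\ket{\S}$, one query, $U_n$, and a Hadamard on the ancilla; the measurement either reveals a balanced pair (reduce $n$ by $2$) or leaves $\ket{\S}$ with the ancilla indicating that the unbalance is not $+d$ (respectively not $-d$). In the latter case the problem collapses to $\f{EXACT}_k^n$ or $\f{EXACT}_{n-k}^n$, which the known algorithm of Ambainis et al.\ solves in $\max\{k,n-k\}=n-k$ queries; the single disambiguating query accounts for the $+1$. If you want to salvage your approach, you would need either a uniform-in-$d$ base construction or an argument like this one that trades the hard base case for an appeal to the $\f{EXACT}_k^n$ algorithm.
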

The algorithm we use to prove Theorem~\ref{thm:alg2} in Section~\ref{sec:alg3} involves iteratively applying a unitary, a single query, another unitary, and a measurement. On one hand, the measurement can identify a balanced pair, so we can reduce the problem size. On the other hand, it can either rule out the case $\sum_i x_i = k$, or the case $\sum_{i} x_i = n-k$. We then continue by solving $\f{EXACT}_{n-k}^{n}$ or $\f{EXACT}_{k}^{n}$, respectively.
The first option is favorable, as it quickly decreases the size of the remaining problem. The worst case is using the first query just to decide whether we need to be solving $\f{EXACT}_{n-k}^{n}$ (or the other case). This takes further $n-k$ queries, so the overall number of queries is bounded from above by $1+n-k$.

One might wonder if this algorithm behaves any better than simply first looking at $\f{EXACT}_{k}^{n}$ and if the answer is no, continuing with solving $\f{EXACT}_{n-k}^{n}$. It turns out that the na\"{\i}ve approach is not very efficient, because the algorithm for $\f{EXACT}_k^n$ involves padding the input with extra bits. Imagine $k=\frac{n}{3}$ and $l=\frac{2n}{3}$. To solve $\f{EXACT}_{k}^{n}$, we would need to pad the input with $\frac{n}{3}$ bits and test for $\f{EXACT}^{\frac{4n}{3}}_{\frac{2n}{3}}$. We could be unlucky that $\frac{n}{3}$ queries would just give us unbalanced pairs, always with one useless bit from the padding. After finally learning that there are {\em not} exactly $k$ ones, we could scratch the $\frac{n}{3}$ newly identified $0$'s in the original instance, but we would still have to continue with $\f{EXACT}_{\frac{2n}{3}}^{\frac{2n}{3}}=\f{EQUALITY}_{\frac{2n}{3}}$. This could require another $\frac{2n}{3}$ queries. All in all, in the worst case we would need $n$ queries. However, the algorithm from Section~\ref{sec:alg3} uses at most $n-k+1$ queries, which translates to $\frac{2n}{3}+1$, which is much better.

Second, for the general case $k+l \neq n$, we claim
\begin{theorem}
  \label{thm:alg3}
  \[Q_E(\f{EXACT}_{k,l}^n) \leq \max\{n-k,l\}+1.
  \]
\end{theorem}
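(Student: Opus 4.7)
The plan is to reduce the general case $k+l\neq n$ to the already established special case $k+l=n$ (Theorem~\ref{thm:alg2}) by padding the input with a constant string of known bits. Since the padding bits are fixed by the algorithm itself, ``queries'' to padded positions can be simulated internally at no cost to the quantum query count, so the upper bound for the padded instance transfers verbatim to the original.

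Split into two cases according to which of $n-k$ and $l$ is larger. If $l > n-k$, i.e.\ $k+l>n$, append $k+l-n$ zeros to $x$ to obtain a string $x'\in\{0,1\}^{k+l}$ with $|x'|=|x|$. Then $\f{EXACT}_{k,l}^{n}(x)=\f{EXACT}_{k,l}^{k+l}(x')$, and the padded instance satisfies $k+l=n'$ with $n':=k+l$, so Theorem~\ref{thm:alg2} applies and yields
\[
Q_E(\f{EXACT}_{k,l}^{n}) \le Q_E(\f{EXACT}_{k,l}^{k+l}) \le (k+l)-k+1 = l+1 = \max\{n-k,l\}+1.
\]
If instead $n-k > l$, i.e.\ $k+l<n$, append $n-k-l$ ones to $x$ to obtain $x'\in\{0,1\}^{2n-k-l}$ with $|x'|=|x|+(n-k-l)$. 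Then $\f{EXACT}_{k,l}^{n}(x)=\f{EXACT}_{n-l,\,n-k}^{2n-k-l}(x')$, the padded parameters satisfy $(n-l)+(n-k)=2n-k-l$, and Theorem~\ref{thm:alg2} gives
\[
Q_E(\f{EXACT}_{k,l}^{n}) \le Q_E(\f{EXACT}_{n-l,\,n-k}^{2n-k-l}) \le (2n-k-l)-(n-l)+1 = n-k+1 = \max\{n-k,l\}+1.
\]
The remaining case $k+l=n$ is Theorem~\ref{thm:alg2} directly, with $\max\{n-k,l\}=n-k=l$.

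There is essentially no technical obstacle; the only point worth verifying is that appending constant bits preserves exact quantum query complexity, which is immediate because the oracle action on the padded coordinates is known unitary data that the algorithm can implement without invoking $O_x$. In particular, running the algorithm of Theorem~\ref{thm:alg2} on the padded instance requires no more queries than stated, and the resulting bound matches $\max\{n-k,l\}+1$ in every case.
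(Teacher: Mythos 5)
Your proposal is correct and follows essentially the same route as the paper: the paper's proof is a one-line combination of Theorem~\ref{thm:alg2} with Lemma~\ref{lem:red}, and that lemma is exactly the padding reduction (append $k+l-n$ zeros when $k+l>n$, or $n-k-l$ ones when $k+l<n$) that you carry out explicitly. Your case analysis and the resulting bounds $l+1$ and $n-k+1$ match the paper's computation precisely.
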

\begin{proof}
From Theorem~\ref{thm:alg2} and Lemma~\ref{lem:red}:
    \[Q_E(\f{EXACT}_{k,l}^n)\leq \frac{n+\max\{n-l-k,l+k-n\}+l-k}{2}+1 =\max\{n-k,l\}+1.\]
\end{proof}

%%%%%%%%%%%%%%%%%%%%%%%%%%%%%%%%%%%%
\subsubsection{The proof of Theorem~\ref{thm:alg2}: an algorithm for unbalance $\pm d$: }
\label{sec:alg3}

In this Section we prove Theorem~\ref{thm:alg2}, presenting an algorithm for the problem $\f{EXACT}_{k,n-k}^n$ that requires $n-k+1$ queries.

Our goal is to find an algorithm deciding whether the number of 1's in the function values is $k$ or $n-k$. Equivalently, this problem can be also called $\f{UNBALANCE}_{d}^{n}$ with $d=l-k=n-2k$: does the input $x$ have ``unbalance'' $\sum_i \hat{x}_i = \pm d$ or not? This will make it easy to compare with the results of the algorithms in Section~\ref{sec:thealg} for $d=1,2,3$.

We start our algorithm with two registers prepared in the unnormalized state
\[
  \left(\frac{d}{n}\ket{0} + \ket{1}\right)\ket{\S},
\]
with $d$ the unbalance we test for.
Conditioned on the first register being $\ket{1}$, we transform the second register to a uniform superposition of states $\frac{1}{\sqrt{n}}\sum_{i=1}^{n}\ket{i}$. We then query the oracle. This gives us
\[
  \frac{d}{n}\ket{0}\ket{\S} + \frac{1}{\sqrt{n}} \ket{1}\sum_i \hat{x}_i \ket{i},
\] 
%\jnote{I think you would get a coefficient $b$ for $\ket{i}$, hence you should start with $b^2$ for $\ket{0}$.} \dnote{Thanks for catching this!}
Controlled by the first register, we apply the operation $U_n$ from \eqref{Uoperation} to the second register (this is where another factor of $\frac{1}{\sqrt{n}}$ comes from), producing
\[
	\frac{d}{n} \ket{0}\ket{\S} +  
    \frac{1}{n} \ket{1} \sum_i \hat{x}_i \ket{\S} +
    \frac{1}{n} \ket{1} \sum_{i,j\in [n], i<j} (\hat{x}_i-\hat{x}_j) \ket{i,j}.
\]
As we are looking at unnormalized states, we can now omit the common prefactor $\frac{1}{n}$. Finally, we apply a Hadamard\footnote{Observe that the Hadamard operation is equal to $U_2$ \eqref{Uoperation} up to up to relabeling of the basis states.} to the first (ancilla) register and get the unnormalized state
\[
   \left( \left(d + \sum_i  \hat{x}_i\right) \ket{0} + 
    \left(d - \sum_i \hat{x}_i\right) \ket{1} 
    \right) \ket{\S}+
    \left(\ket{0}-\ket{1}\right) 
        \sum_{i,j\in [n], i<j} (\hat{x}_i-\hat{x}_j) \ket{i,j}.
\]
Finally, we measure the second register. Whenever we get a pair $\ket{i,j}$, we know that it is an unbalanced one, with $\hat{x}_i=-\hat{x}_j$. We can get rid of it, and continue solving a smaller problem with $n'=n-2$. On the other hand, if we get $\ket{\S}$ in the second register, we need to look at the ancilla (first) register as well. If the ancilla is $\ket{0}$, we learn that the overall unbalance is not $-d$. On the other hand, if the ancilla is $\ket{1}$, we learn that the overall unbalance is not $d$. Thus, by using a single query, our problem changes from $\f{UNBALANCE}_{d}^{n}$ to $\f{EXACT}_{k}^{n}$ or $\f{EXACT}_{n-k}^{n}$. Switching to the optimal algorithm for $\f{EXACT}_{k}^{n}$, this reduced problem can be solved in $\leq n-k$, i.e. $\leq \frac{n+d}{2}$ queries. 

Therefore, by iterating the above steps, we reduce the problem size by 2 several times, and then at some point reduce the problem to $\f{EXACT}_{k'}^{n'}$ or $\f{EXACT}_{n'-k'}^{n'}$. 
The worst option in terms of the number of queries is when we never reduce the problem size, and use the very first query just to rule out one of the options $d$ or $-d$ for the unbalance.
We then end up having to solve
$\f{EXACT}_k^n$ or $\f{EXACT}_{n-k}^n$, which both can use another $n-k$ queries. Altogether, we require
\[
    Q_E(\f{EXACT}_{k,n-k}^n) \leq n - k + 1
\]
queries. This concludes the proof of Theorem~\ref{thm:alg2}.

For comparison with the algorithms in Section~\ref{sec:thealg}, we can formulate the result in terms of the unbalance $d$. Recalling $d=n-2k$, this algorithm finds the answer using
\[
    n-k+1 = \frac{n+d}{2} + 1.
\]
queries. Recall that we have $l=n-k$ here. 
For $d=1$, this gives $\frac{n+3}{2}$ queries, i.e. one extra query in comparison to Theorem~\ref{thm:alg}.
For $d=2$, this algorithm needs $\frac{n+4}{2}$ queries, i.e. two more queries over Theorem~\ref{thm:alg}. 
For $d=3$, this algorithm needs $\frac{n+5}{2}$ queries, i.e. again two extra queries above Theorem~\ref{thm:alg}. 
Thus, this algorithm is not optimal for these cases and provides just an upper bound. Nevertheless, it works for general $d$, and thus for general $k$. Furthermore, by padding the input, we can get a fully general algorithm (for $k$ and $l$ not tied by $l=n-k$) as described in the proof of Theorem~\ref{thm:alg3}.

%\dnote{Old notes on efficiency, probably to be scratched out: For odd $n=2k+1$ and $u=1$, this gives us $k+2 = \frac{n+3}{2}$ queries, while for even $n=2k$ and $u=2$, this results in $k+1 = \frac{n}{2}+1$ queries. There should exist a better approach, because we are in some way wasting that $\pm u$ distinguishing query (we're not making the problem smaller in that particular step). In fact, we think \dnote{Wasn't it the case that you had a special algorithm for this, Janis?} \jnote{Initially yes, but then we generalized it to the $\sr{UNB-R}$ idea. So now it is just a special case of a general approach.} that for $\f{EXACT}_{2,4}^{6}$ the best algorithm works with 3 queries, while the present approach makes 5 queries in the worst case!}

%%%%%%%%%%%%%%%%%%%%%%%%%%%%%%%%
\section{Conclusion}
We have shown that the exact quantum query complexity for $\f{EXACT}_{k,l}^n$ is 
\[Q_E(\f{EXACT}_{k,l}^n)=\begin{cases}
    \max\{n-k,l\}\text{, if }d=1\text{ and }l=n-k,\\
    \max\{n-k,l\}-1\text{, if }d\in\{2,3\}.
    \end{cases}\]
where $d=l-k$.
When $d=2$ and $l=n-k$, this provides another example of a symmetric function with $D(f)=2Q_E(f)$ which is the largest known gap between $D(f)$ and $Q_E(f)$ for symmetric functions $f$. To show that $Q_E(\f{EXACT}_{k,k+1}^{2k+1})>k$ we use an approach based on representation theory. We do not know if this lower bound method is sufficient to prove $Q_E(f)\geq \frac{n}{2}$ for all symmetric $f$. In particular, it seems difficult to apply it for the symmetric function $\f{SYM}_a$ has, for example, $a=0^51^50^51^50^5$.

We also give a general algorithm and a lower bound,  for all $l,k$, showing that:
\[\max\{n-k,l\}-1 \leq Q_E(\f{EXACT}_{k,l}^n) \leq \max\{n-k,l\}+1.\]

Previously known quantum algorithms for symmetric functions (e.g., the well known algorithm for PARITY and the algorithms for $\f{EXACT}_k^n$ \cite{AIS13}) typically measure the quantum state after each query.
In contrast, our algorithm for $d\in\{1,2,3\}$ does not have this structure.
Morerover, our numerical simulations suggest that there is no algorithm for $\f{EXACT}_{k,l}^n$ that uses an optimal number of queries and measures the state completely after each query. We think that it is an interesting problem to study 
the power of quantum algorithms with the restriction that after each query the state must be measured completely
and the limits of what can be achieved with such algorithms.
%%%%%%%%%%%%%%%%%%%%%%%%%%%%%%%%
%%%%%%%%%%%%%%%%%%%%%%%%%%%%%%%%

\section*{Acknowledgements}
This research was supported by the ERC Advanced Grant MQC, Latvian State Research Programme NexIT Project No. 1, EU FP7 project QALGO, the People Programme (Marie Curie Actions) EU's 7th Framework Programme under REA grant agreement No. 609427, Slovak Academy of Sciences, and the Slovak Research and Development Agency grant APVV-14-0878 QETWORK. We also thank Bujiao Wu (\email{wubujiao@ict.ac.cn}) for spotting the inaccuracies in the proofs of Lemmas \ref{lem:rs} and \ref{lem:1or4}.

%%%%%%%%%%%%%%%%%%%%%%%%%%%%%%%%
%%%%%%%%%%%%%%%%%%%%%%%%%%%%%%%%

\bibliographystyle{splncs03}

\phantomsection
\addcontentsline{toc}{chapter}{References}
\bibliography{quantum}

\appendix
%%%%%%%%%%%%%%%%%%%%%%%%%%%%%%%%
\section{Proof of Lemma \ref{lem:1or4}}
\begin{proof}
The algorithm is similar to the subroutine $\sr{UNB-R}_d^n$. The goal is to construct an amplitude that is a symmetric polynomial of degree 3 of the form $\sum_{i<j<k} \hat{x}_i\hat{x}_j\hat{x}_k+c\sum_i \hat{x}_i$ that is 0 when $|x|\in\{1,4\}$. Start with
\[
\sum_{i\in[n]}{\hat{x}_i}\ket{\S}+c_1\sum_{\substack{i,j\in[n]\\i<j}}{(\hat{x}_i-\hat{x}_j)\ket{i,j}}.
\]
and perform $R_{\alpha}$ for suitable $\alpha$ on $c_1$ part of the state. Then $U_5^{-1}$ to obtain $c_2$ part of the state and $U_3^{-1}$ on the remainder to obtain $c_4$ part of the state.
\[
c_2\sum_{i\in[n]}
\Bigg(
\sum_{\substack{j\in[n]\\j\neq i}}{\hat{x}_j}
+c_3\hat{x}_i
\Bigg)
\ket{i}\ket{\S}
+c_4\sum_{k\in[n]} \sum_{\substack{i,j\in[n]\backslash\{k\}\\i<j}}
(\hat{x}_i-\hat{x}_j)\ket{k}\ket{i,j}
\]
The constraints induced by these transformations are
    \[(c_2\cdot c_3+4\cdot c_2)/\sqrt{5} = 1, \qquad \qquad \qquad \qquad    
	c_1^2 = (c_2\cdot (c_3-1)/\sqrt{5})^2+(3\cdot c_4/\sqrt{3})^2.\]
We continue with a query on variable indicated by the first register.
\[
c_2\sum_{i\in[n]}{
    \Bigg(
    \hat{x}_i\sum_{\substack{j\in[n]\\j\neq i}}{\hat{x}_j}+c_3
    \Bigg)
    \ket{i}\ket{\S}}+c_4\sum_{k\in[n]}{\hat{x}_k\sum_{\substack{i,j\in[n]\backslash\{k\}\\i<j}}{(\hat{x}_i-\hat{x}_j)\ket{k}\ket{i,j}}}\]
Next, we perform $U_5$ on $c_2$ part of the state to obtain $c_5$ and some $c_7$ and $U_3$ on $c_4$ part of the state to obtain some $c_7$ and $c_8$. Perform $R_\alpha^{-1}$ for suitable $\alpha$ to merge $c_7$ states.
\[
    \begin{split}
    c_5
       \Bigg(
       \sum_{\substack{i,j\in[n]\\i<j}}
       \hat{x}_i \hat{x}_j 
       + c_6
      \Bigg) \ket{\S}
    + c_7 \sum_{\substack{i,j\in[n]\\i<j}} (\hat{x}_i-\hat{x}_j)
    \Bigg(
    \sum_{\substack{k\in[n]\\k\notin\{i,j\}}}{\hat{x}_k}
    \Bigg)
    \ket{i,j}+\\
    +c_8\sum_{\substack{i,j,k,l\in[n]\\i<j\\k<l\\\{i,j\}\cap \{k,l\}=\emptyset}}
    (\hat{x}_i-\hat{x}_j)(\hat{x}_k-\hat{x}_l)\ket{i,j,k,l}
\end{split}
\]
The constraints induced by these transformations are
\begin{align*}
 &5\cdot c_2\cdot c_3/\sqrt{5} =c_5\cdot c_6, 
 &c_2\cdot 2/\sqrt{5} = c_5, \\
 &c_7^2 = c_2^2/5+c_4^2/3, 
 &c_4/\sqrt{3} = c_8.
\end{align*}
%\[5\cdot c_2\cdot c_3/\sqrt{5} = c_5\cdot c_6, \]
%	 \[c_2\cdot 2/\sqrt{5} = c_5,\]
%	 \[c_7^2 = c_2^2/5+c_4^2/3,\]
%	 \[c_8 = c_4/\sqrt{3}.\]
First we split $\ket{i,j}$ into two parts using $R_\alpha$. Now we perform $U_5^{-1}$ on $\ket{\S}$ as input sum state and $c\ket{i,j}$ for some $c$ as difference states. The output is stored as $\ket{i}\ket{S}$ states. For each $\ket{i,j}$ we apply $U_3^{-1}$ to $\sqrt{c_7-c^2}\ket{i,j}$ as input sum state and $\ket{i,j,k,l}$ as the difference states. Again, it is easier to verify that we obtain the following state by running the algorithm backwards.
\[
\begin{split}
c_9\sum_i
\Bigg(
\hat{x}_i\sum_{j\neq i}{\hat{x}_j}+c_{10}\sum_{\substack{j<k\\j\neq i\\k \neq i}}
    \hat{x}_j\hat{x}_k + c_{11}
\Bigg)
\ket{i}\ket{\S}+\\
+c_{12}\sum_k
    \sum_{\substack{i<j\\i\neq k\\j\neq k}}{(\hat{x}_i-\hat{x}_j)
\Bigg(
\sum_{l\notin\{i,j,k\}}
    \hat{x}_l}+c_{13}\hat{x}_k
\Bigg)
    \ket{k}\ket{i,j}
\end{split}
\]
This step induces the constraints:
	\[ (2\cdot c_9+3\cdot c_9\cdot c_{10})/\sqrt{5} = c_5, \]
	\[ c_9\cdot c_{11}\cdot 5/\sqrt{5} = c_5\cdot c_6, \]
	\[ c_7^2 = (c_9\cdot (1-c_{10})/\sqrt{5})^2+(c_{12}\cdot (2+c_{13})/\sqrt{3})^2, \]
	\[ c_8 = c_{12}\cdot (c_{13}-1)/\sqrt{3}. \]
Perform a query on variable indicated by the first register.
\[\begin{split}c_9\sum_i{
\Bigg(
\sum_{j\neq i}{\hat{x}_j}+c_{10}\hat{x}_i\sum_{\substack{j<k\\j\neq i\\k \neq i}}{\hat{x}_j\hat{x}_k}+c_{11}\hat{x}_i
\Bigg)
\ket{i}\ket{\S}}+\\+c_{12}\sum_k{\sum_{\substack{i<j\\i\neq k\\j\neq k}}{(\hat{x}_i-\hat{x}_j)
\Bigg(
\hat{x}_k\sum_{l\notin\{i,j,k\}}{\hat{x}_l}+c_{13}
\Bigg)
\ket{k}\ket{i,j}}}
\end{split}
\]
Apply $U_5$ to the $\ket{i}\ket{S}$ states as input states and $\ket{\S}$ as the output sum state and part of $\ket{i,j}$ as output difference states. For each $i,j$ run $U_3$ on $\ket{k}\ket{i,j}$ states as input and $\ket{i,j}$ as the output sum state and $\ket{i,j,k,l}$ as the difference states. Notice, that we obtain $\ket{i,j}$ states from two different sources --- as sum states from $\ket{k}\ket{i,j}$ and difference states from $\ket{i}\ket{\S}$. We then merge them using $R_\alpha^{-1}$ for suitable $\alpha$, getting
\[
\begin{split}
c_{14}
\Bigg(
\sum_{i<j<k} \hat{x}_i\hat{x}_j\hat{x}_k
+c_{15}\sum_i \hat{x}_i
\Bigg)
\ket{\S}+\\
+c_{16} \sum_{i<j}
    (\hat{x}_i-\hat{x}_j)
        \Bigg(
        \sum_{\substack{k<l\\\{i,j\}\cap \{k,l\}=\emptyset}}         
        \hat{x}_k \hat{x}_l + c_{17}
        \Bigg) \ket{i,j} + \\
        +c_{18} \sum_{
                \substack{i,j,k,l,m\\
                i<j\\
                k<l\\
                \{i,j\}\cap \{k,l\}=\emptyset\\
                m \notin \{i,j,k,l\}}}
    (\hat{x}_i-\hat{x}_j) (\hat{x}_k-\hat{x}_l) 
    \hat{x}_m \ket{i,j,k,l}.
\end{split}
\]
This step induces the constraints:
	\[ c_{14} = 3\cdot c_9\cdot c_{10}/\sqrt{5}, \]
	\[ c_{14}\cdot c_{15} = (4\cdot c_9+c_9\cdot c_{11})/\sqrt{5}, \]
	\[ c_{16}^2 = (c_{12}\cdot 2/\sqrt{3})^2 + (c_9\cdot c_{10}/\sqrt{5})^2, \]
	\[ c_{18} = c_{12}/\sqrt{3}, \]
	\[ c_{11}-1 = c_{17}\cdot c_{10}, \]
    \[ 3c_{13} = 2c_{17}.\]

The amplitude of the $c_{14}$ part is zero, when $|x|\in\{1,4\}$. The amplitude of the $c_{16}$ part is zero, when $|x|\in\{0,2,3,5\}$. The amplitude of the $c_{18}$ part is zero, when $|x|\in\{0,1,4,5\}$. This induces constraints:
\[-2 + c_{15}\cdot 3 = 0, \qquad \qquad \qquad \qquad -1+c_{17} = 0. \]
There is only one solution to the system of constraints up to some $\pm$ signs.
\begin{align*}
c_1 &= 1/(4 \sqrt{7}), & c_{10} &= 5,\\
 c_2 &= 17/(16 \sqrt{5}), & c_{11} &= 6,\\
   c_3 &= 12/17, & c_{12} &= (3 \sqrt{3/7})/16,\\
 c_4 &= \sqrt{3/7}/16, &c_{13} &= 2/3,\\
 c_5 &= 17/40, &c_{14} &= 3/8,\\
 c_6 &= 30/17, &c_{15} &= 2/3,\\
   c_7 &= (2 \sqrt{2/7})/5, &c_{16} &= 1/(2 \sqrt{7}),\\
 c_8 &= 1/(16 \sqrt{7}), &c_{17} &= 1,\\
 c_9 &= 1/(8 \sqrt{5}), &c_{18} &= 3/(16 \sqrt{7}).
\end{align*}
\end{proof}

\section{An asymptotically optimal algorithm for a class of symmetric functions}
\label{sec:symalg}

%\dnote{We should have an intro sentence here: We now present an asymptotically optimal algorithm for a class of symmetric functions $\f{SYM}_a$ mentioned in Section~\ref{sec:ourresults}.}
%\dnote{Do we mention this somewhere in the text? Where is a reference to this Appendix?} 
%\jnote{I think we should state the theorem after section 3.4. and leave the proof in appendix. If reviewers think it is interesting it could be promoted to main text.}

\begin{proof}[Proof of Theorem~\ref{thm:sym}]
The lower bound follows trivially from von zur Gathen's and Roche's work \cite{vZGR97} and the polynomial method \cite{BBC+98}.

For the upper bound, we now present an algorithm. Its main idea is to use the algorithm for $\f{UNBALANCE}_d^n$ from Section~\ref{sec:alg3} to successively eliminate the weights $i$ from consideration, such that $a_i=1$, or reduce the problem size by 2 bits. Once we have eliminated all possible weights $i:a_i=1$ except one, we switch to the algorithm for $\f{EXACT}_k^n$. Thus the algorithm consists of two stages.

In the first stage of the algorithm we will keep track of the problem, to which we have reduced the original problem, by $\mathtt{a}\in\{0,1,*\}^{n'+1}$, where $\mathtt{a}_i=*$ indicates $|x'|\neq i$ where $x'$ is the padded input.
Padding input $x$ with a zero corresponds to appending 0 at the end of $\mathtt{a}$; padding with a one corresponds to prepending 0 at the beginning of $\mathtt{a}$. Let us denote this operation by $\mathtt{a}:0$ or $0:\mathtt{a}$. A useful operation, when we learn that $x_i\neq x_j$, is the removal of first and last elements of $\mathtt{a}$ and it will be denoted by $\downarrow\mathtt{a}\downarrow$. The algorithm starts with $\mathtt{a}=a$.
%\dnote{Why do you need the padding?} 
%\jnote{It is easier to explain the algorithm, if $m$ sweeps only from left to right. So we move it to the leftmost position at the start.}
\begin{algorithm}
\caption{Stage 1: Eliminate multiple weights $i:a_i=1$}\label{alg:elim}

\begin{algorithmic}
\For {$i=1,2,\ldots, 2g(n)$}
    \State $\mathtt{a} \, \gets \,  0:\mathtt{a}$
\EndFor
\State{$\ell \, \gets \, \frac{n}{2}+g(n)$} \Comment{$\ell$ points to the position of leftmost potential 1}
\For {$m=0,\frac{1}{2},1,\ldots,2g(n)-\frac{1}{2},2g(n)$} \Comment{$m$ is the middle of $\mathtt{a}$ relative to $\ell$}
    \While {$\exists \mathtt{a}_i=\mathtt{a}_j=1:\frac{i+j}{2}=\ell+m$}
        \State Run one step of the $\f{UNBALANCE}_{j-i}^{2\ell+2m}$ algorithm from Section~\ref{sec:alg3}
        \If{we learn $x_k\neq x_l$}
            \State $\mathtt{a} \, \gets \,  \downarrow\mathtt{a}\downarrow$
            \State $\ell \, \gets \,  \ell-1$
        \Else \Comment {We learn that $|x'|\neq \mathtt{a}_i$}
            \State $\mathtt{a}_i \, \gets \,  *$
        \EndIf
    \EndWhile
    \State $\mathtt{a} \, \gets \,  \mathtt{a}:0$
\EndFor
\end{algorithmic}
\end{algorithm}
If at any point during this computation $\mathtt{a}$ does not contain $0$ (or $1$, respectively), we output $1$ ($0$, resp.). If we never reach the point where $\mathtt{a}$ does not contain $0$ or $1$, but the first stage finishes, in the end $\mathtt{a}$ must contain exactly one $1$. This is so, because the pointer to the middle $m$ sweeps through all 1-s originally in $a$.

Now we proceed with the second stage of our algorithm. To decide $\mathtt{a}$, knowing that for exactly one weight $i:\mathtt{a}_i=1$, we use an algorithm for $\f{EXACT}_{k'}^{n'}$ where $n'$ is the size of the input at the end of the first stage and $k'$ is the only weight with $\mathtt{a}_{k'}=1$.

Let us calculate the number of queries used by the first stage.
\[\#(\text{queries used by Stage 1})=\#(\text{we hit branch }|x|\neq a_i)+\#(\text{we hit branch }x_k\neq x_l).\]
Since the initial $a$ contains at most $2g(n)+1$ ones, $\#(\text{we hit branch }|x|\neq a_i)\leq 2g(n)$. For the sake of brevity, let us denote $\#(\text{we hit branch }x_k\neq x_l)$ by $t$. The number of queries used by second stage of the algorithm depends on $t$, because we have eliminated $2t$ variables from $x$, padded or otherwise.

Now, to determine the complexity of the second stage, we calculate $n'=n+6g(n)+1-2t$ and $\frac{n}{2}+g(n)-t\leq k'\leq \frac{n}{2}+3g(n)-t$. Then,
\[
\#(\text{queries used by Stage 2})= Q_E(\f{EXACT}_{k'}^{n'}) \leq \max\left\{\frac{n}{2}+5g(n)+1-t,\frac{n}{2}+3g(n)-t\right\}.
\]
The total number of queries we use is at most
\[\begin{split}\#(\text{queries used by Stage 1})+\#(\text{queries used by Stage 2})\leq \\
\leq 2g(n)+t+\frac{n}{2}+5g(n)+1-t = \frac{n}{2}+7g(n)+1=\frac{n}{2}+o(n)\sim\frac{n}{2} \end{split}.\]

\end{proof}
The algorithm from Section~\ref{sec:alg3} can be extended to either exclude one of two weights for $\f{EXACT}_{k,l}^n$ or two opposite inputs for more general $k,l$, as described in Section~\ref{sec:unbalanceUW}. The only requirement for the weights $k,l$ imposed by the algorithm is that $k<\frac{n}{2}<l$. Using that algorithm, the constant $c$ in front of $g(n)$ can be decreased from 7 to 5. The constant is relevant when $g(n)=\epsilon n$ since it lets us construct exact quantum algorithms using less than $n$ queries, provided $\epsilon < \frac{1}{2c}$.
%\dnote{When we are padding, what if the algorithm returns a balanced pair with one of the pair being the padded bit? Do we lose something now, or is it still OK?} 
%\jnote{Yes, because we only pad by O(g(n)) bits in total. We don't need to repad anything if the algorithm returns $x_i\neq x_j$.}

%%%%%%%%%%%%%%%%%%%%%%%%%%%%%%%%%%%%%%%%%%%%%%%%%%%%%%%%%%%%
%%%%%%%%%%%%%%%%%%%%%%%%%%%%%%%%%%%%%%%%%%%%%%%%%%%%%%%%%%%%

\section{Another algorithm: testing for unbalance $+u,-w$}
\label{sec:unbalanceUW}

In this Section, we develop a test for two particular, nonzero values of unbalance with opposite signs, i.e. solving the problem $\f{EXACT}_{\frac{n-u}{2},\frac{n+w}{2}}^{n}$ for $0<u,w\leq n$.
\dnote{Some care needs to be taken if $u$ or $w$ is zero, we can then still flip the function and continue\dots}
It can be later used quite efficiently to test for a range of unbalances, eliminating their extreme values one by one (from either end, at random), or decreasing the problem size.
The test generalizes the approach of Section~\ref{sec:generalalg}. 

%%%%%%%%%%%%%%%%%%%%%%%%%%%%%%%%%%%%%%%%%
\subsection{Testing for two particular values $\{+u,-w\}$ of the unbalance.}

Our goal is to determine whether for a function with exactly $k$ ones, the unbalance $\sum_i \hat{x}_i = -k + n-k = n-2k$ is exactly $+u$ or $-w$, with $u,w> 0$. We will to this by an algorithm that repeats a number of steps that either end up reducing the problem size, eliminating the option $+u$, or eliminating the option $-w$.

Let us start with an unnormalized, two-register (ancilla, data) state
\[
	\left(\frac{\sqrt{uw}}{n} \ket{0} + \ket{1}\right) \ket{\mathcal{S}}.
\]
Controlled on the first, ancilla register being $\ket{1}$, we prepare a uniform superposition in the second, data register and query the data register. We obtain
\[
		\frac{\sqrt{uw}}{n}\ket{0} \ket{\mathcal{S}} 
        + \frac{1}{\sqrt{n}} \ket{1} \sum_i \hat{x}_i \ket{i}
\]
Next, again controlled on the ancilla being $\ket{1}$, we apply the unitary $U_n$ \eqref{Uoperation}, giving us
\[
	\frac{\sqrt{uw}}{n} \ket{0}\ket{\mathcal{S}} 
    + \frac{1}{n} \left(\sum_{i} \hat{x}_i\right) \ket{1} \ket{\mathcal{S}}
    + \frac{1}{n}\ket{1} \sum_{i<j} \left(\hat{x}_i-\hat{x}_j\right) \ket{i,j}.
\]
We look at unnormalized states, so we can drop the overall normalization factor $\frac{1}{n}$.
Finally, conditioned on the second register being $\ket{\mathcal{S}}$, we rotate the first qubit using the unitary
\[
	Q =  \frac{1}{\sqrt{u+w}} \left[
		\begin{array}{rr}
			\sqrt{u} & - \sqrt{w}\\\sqrt{w} & \sqrt{u}
		\end{array}
	\right], 
\]
This results in the (unnormalized) state
\[
    \frac{1}{\sqrt{u+w}} \left(
        \sqrt{w} \left( u - \sum_i \hat{x}_i \right) \ket{0}
    + \sqrt{u} \left( w + \sum_i \hat{x}_i \right) \ket{1}
    \right)\ket{\mathcal{S}}
    + \ket{1} \sum_{i<j} \left(\hat{x}_i-\hat{x}_j\right) \ket{i,j}.
\]
Finally, we perform a full measurement. Whenever we get a pair $\ket{i,j}$ in the second register, we know that it is unbalanced, with $\hat{x}_i=-\hat{x}_j$. We can get rid of it, and continue solving a smaller problem with $n'=n-2$, and the same possible unbalances. On the other hand, if we the second register is  $\ket{\mathcal{S}}$, the value of the first register tells us that the unbalance is either not equal to $u$ (if we measure $\ket{0}$, or not equal to $-w$ (if we measure $\ket{1}$).

\dnote{What now? It seems like we can't make a test which tests {\em only} for a value $w$, or, more generally, for two values of unbalance with the same sign. For that, we probably need to append the system with a few more known bits, to shift one of the tested unbalances on the other side of zero.

This is still not optimal (I believe) as we need one step to break the symmetry, and then we use the optimal EXACT$^n_k$ subroutine to figure out the rest. On the other hand, it at least achieves some speedup for the class of functions that J\=anis describes.
}

\subsection{Computing $\f{SYM}_a$}
We can use the algorithm described in this appendix (instead of the algorithm from Section~\ref{sec:alg3}) as a subroutine to the algorithm Appendix~\ref{sec:symalg}.  We also slightly modify the algorithm from Appendix~\ref{sec:symalg}. Now instead of moving $m$ through all of the 1-s in $a$, we start with $l+m=\frac{n}{2}$. Then call algorithm from Appendix~\ref{sec:unbalanceUW} until either 1-s left of $m$ are eliminated or 1-s to the right of $m$ are eliminated. Then proceed by moving $m$ in the direction of the remaining 1-s.

Again, the algorithm has two stages with complexities:
\[\#(\text{queries used by Stage 1})=\#(\text{we hit branch }|x|\neq a_i)+\#(\text{we hit branch }x_k\neq x_l);\]
and if we denote $\#(\text{we hit branch }x_k\neq x_l)$ in the Stage 1 by $t$, the number of variables in the input after Stage 1 by $n'=n+2g(n)-2t$ and by $k'$ the only weight for which $\mathtt{a}_{k'}=1$, $\frac{n}{2}-g(n)-t \leq k'\leq \frac{n}{2}+3g(n)-t$ then
\[
\#(\text{queries used by Stage 2})= Q_E(\f{EXACT}_{k'}^{n'}) \leq \frac{n}{2}+3g(n)-t.\]
Thus the total number of queries used is $\leq \frac{n}{2}+5g(n)$.

\end{document}